\pgfplotsset{compat=1.9}
\date{}
\author{Keerthana Gurushankar \orcidlink{0009-0004-4350-6906} \\ 
  \texttt{kgurusha@cs.cmu.edu}
\and 
  Noah G. Singer \orcidlink{0000-0002-0076-521X}\\
  \texttt{ngsinger@cs.cmu.edu}
\and 
Bernardo Subercaseaux \orcidlink{0000-0003-2295-1299}\\
  \texttt{bsuberca@cs.cmu.edu}}
\newtheorem{theorem}{Theorem}
\newtheorem{definition}[theorem]{Definition}
\newtheorem{lemma}[theorem]{Lemma}
\newtheorem{proposition}[theorem]{Proposition}
\crefname{line}{Line}{Lines}
\Crefname{line}{Line}{Lines}
\crefname{lemma}{Lemma}{Lemmas}
\Crefname{lemma}{Lemma}{Lemmas}
\begin{document}

\title{\Huge \sc Latency Guarantees for\\ Caching with Delayed Hits\thanks{Author names alphabetical. \textsc{k.g.} was supported by NSF grant 1942124. \textsc{n.g.s.} was supported in part by an NSF Graduate Research Fellowship (Award DGE 2140739). \textsc{b.s.} was supported by the NSF under grant DMS-2434625.}}

\maketitle

\vspace{-4em}
\noindent\rule{\linewidth}{1pt}

\begin{abstract}
  In the classical caching problem, when a requested page is not present in the cache (i.e., a ``miss''), it is assumed to travel from the backing store into the cache \emph{before} the next request arrives. However, in many real-life applications, such as content delivery networks, this assumption is unrealistic.
  The \emph{delayed-hits} model for caching, introduced by Atre, Sherry, Wang, and Berger, accounts for the latency between a missed cache request and the corresponding arrival from the backing store. This theoretical model has two parameters: the \emph{delay} $Z$, representing the ratio between the retrieval delay and the inter-request delay in an application, and the \emph{cache size} $k$, as in classical caching. Classical caching corresponds to $Z=1$, whereas larger values of $Z$ model applications where retrieving missed requests is expensive. Despite the practical relevance of the delayed-hits model, its theoretical underpinnings are still poorly understood.
  
  We present the first tight theoretical guarantee for optimizing delayed-hits caching: The ``Least Recently Used'' algorithm, a natural, deterministic, online algorithm widely used in practice, is $O(Zk)$-competitive, meaning it incurs at most $O(Zk)$ times more latency than the (offline) optimal schedule. Our result extends to any so-called ``marking'' algorithm.
\end{abstract}
\noindent\rule{\linewidth}{1pt}

\newcommand{\kcache}{k\textsc{-Caching}}
\newcommand{\Zkcache}{(Z,k)\textsc{-DelayedHitsCaching}}
\newcommand{\twocache}{(2,k)\textsc{-DelayedHitsCaching}}
\newcommand{\kserver}{k\textsc{-Servers}}
\newcommand{\kwtcache}{k\textsc{-WeightedCaching}}

\newcommand{\rseq}{\mathbf{r}}
\newcommand{\esched}{\mathbf{e}}
\newcommand{\Alg}{\mathcal{P}}

\newcommand{\POPT}{\Alg_\texttt{OPT}}
\newcommand{\PLRU}{\Alg_\texttt{LRU}}
\newcommand{\PLFU}{\Alg_\texttt{LFU}}
\newcommand{\PFIFO}{\Alg_\texttt{FIFO}}

\newcommand{\LOPT}{L_{\texttt{OPT}}}
\newcommand{\LLRU}{L_{\texttt{LRU}}}

\newcommand{\CR}{\mathsf{CR}}
\newcommand{\cost}{\operatorname{cost}}


\section{Introduction}

Caches are vital components in many modern computing systems. A \emph{cache} is a small but fast store keeping frequently accessed data to allow for quick and efficient access. Whenever data that is not currently in the cache gets requested, that data must be fetched from a larger but much slower memory store, thus resulting in a higher latency. Naturally, minimizing latency is one of the fundamental goals of caching policies.

As caches play an integral role in many memory-intensive and latency-sensitive computer systems~\cite{BSH17,RCK+16}; the problem of designing caching policies to decide which data to cache, and which data to evict from the cache when it is full, have been studied extensively and in various settings~\cite{Wan99,FS17}.

Despite the mathematical beauty and simplicity of the earlier models of caching as an~\emph{online} problem~\cite{ST85, FKL+91, CKN03}, several gaps between those models and practical applications have been identified, giving rise to more realistic (but harder to analyze) models. Some examples are:
\begin{itemize}
    \item \emph{``What if item requests are diverse in nature?''}: while the original caching problem considered uniform memory pages, researchers have studied models in which memory pages can have different sizes as well as different ``weights'' that model how certain pages are more important than others for a given application. In those models, when an important page is requested but is not in the cache, the model factors in a higher cost~\cite{CKPV91, folwarcznyGeneralCachingHard2015, generalCaching}.
    \item \emph{``What about fairness?''}: In multi-user environments,~\emph{caching with reserves} aims to model multiple agents that share a common cache in which each agent gets a guaranteed portion of the space~\cite{cachingWithReserves, reservesViaMarking}.
    \item \emph{``What if we have good predictions about the memory that will be requested in the future?''}: In the traditional online setting, caching policies are assumed to not have any knowledge of future requests. The line of work kick-started by Lykouris and Vassilvitskii~\cite{LV21} incorporates imperfect knowledge about future requests through ``predictions'' (e.g., obtained from machine learning algorithms)~\cite{WeidghtedWithPredictions, antoniadis2022pagingsuccinctpredictions, guptaAugmentingOnlineAlgorithms2022}.
    \item \emph{``What if requests have deadlines?''} In practice, requests might not need to be served immediately, but will rather have associated deadlines, as in the case of the I/O Linux kernel scheduler~\cite{TimeWindows}.
\end{itemize}

In a similar spirit, the \emph{``caching with delayed hits''} model, introduced by Atre, Sherry, Wang and Berger~\cite{ASWB20}, aims to account for a different shortcoming of traditional caching models. Let us illustrate with a simple case scenario.

\begin{tcolorbox}[colframe=blue!50!black, colback=blue!10!white, title=Example 1, rounded corners, breakable, enhanced]
Consider a scenario in which data is being requested at a consistent rate of $100$ requests per second, meaning the inter-request time (IRT) is roughly $0.01$ seconds.  Whenever a request is present in the cache, it can be served very quickly, e.g., $5$ milliseconds, and thus will usually be served before the next request arrives. However, if a request is not in the cache and needs to be fetched from the backing memory store, it takes $0.2$ seconds to be served. Consider now a user $A$ that requests a piece of data $p$ that is present in the cache at time $t$, a user $B$ that requests a different datum $p'$ at time $t + 0.01$ that is not present in the cache at that time, and a third user $C$ that requests the same datum $p'$ at time $t + 0.15$.
As $p$ was present in the cache, it will be served with almost no latency at time $t+0.005$; this is traditionally called a \emph{``hit''}.
Because $p'$ was not present in the cache at the time of its first request, $t + 0.01$, it will take $0.2$ seconds to serve it at time $t + 0.21$. Thus, $B$ will experience the worst-case latency; a so-called \emph{``miss''}. Interestingly, $C$ will have an intermediate experience, since they requested $p'$ at time $t+0.15$, and at time $t+0.21$ their request can be served by benefiting from the fact that $B$ requested the same datum earlier on. This can be easily achieved by implementing a queue that stores the different requests for each datum so they can all be served once a datum arrives from the backing store.
This intermediate latency phenomenon, that occurs when subsequent requests for a datum arrive before the system has time to retrieve the datum from the backing store,  is known as a \emph{``delayed hit''}.
\end{tcolorbox}

In general, when the latency to access the backing memory store is much larger than the IRT (say, by more than an order of magnitude), either because of very high fetching latency (e.g., fetches over large geographical distance) or very high throughput of requests, successive requests for a missing datum may still encounter higher latency while the initial fetch is not completed, resulting in the \emph{``delayed hits''} phenomenon. Traditional caching models, however, would only consider the latency of the first request as a cost to be minimized.

In many modern systems, delayed hits are not merely a technical edge case, but rather have a significant impact on the latency observed. As throughput demands on many systems have grown steadily~\cite{FPM+18}, memory-fetching latency has long remained near its fundamental physical limits. 
Atre, Sherry, Wang, and Berger~\cite{ASWB20} note applications in network software switches where the fetch latency to IRT ratio ($Z$) is in the order of $100$-$1000$x, and devise delayed-hits aware algorithms which achieve up to 45\% reduction in latency compared to the best classical caching algorithms. Zhang \emph{et al.}~\cite{ZTL+22} address applications in content delivery networks (CDNs), where they obtain latency reductions of $9$-$32\%$. In such settings, it is important to analyze and design caching policies that take into account the delayed hits problem.

In terms of theoretical results, Manohar and Williams~\cite{MW20} proved an~$\Omega(Zk)$ lower bound for the competitive ratio of any deterministic online caching policy in the delayed hits setting (where $Z$ is the fetch latency to IRT ratio, and $k$ is the cache size), which is somewhat analogous to the standard $\Omega(k)$ lower bound for caching without delayed hits~\cite{ST85}. However, proving upper bounds in the delayed hits setting is significantly more challenging and to the best of our knowledge, no upper bound that is linear in $Z$ is known. Zhang, Tan, Li, Han, Jiang, and Li~\cite{ZTL+22} claim an $O(Z^{3/2} k)$ upper bound, using an online delayed-hits algorithm which simulates an online, classical weighted caching algorithm; however, their model differs substantially from ours, allowing heterogeneous page sizes, multiple simultaneous arrivals, and a different notion of cache size (which includes pages on the queue except for those which will not eventually enter the cache).

Our main result is that standard ``marking'' based algorithms such as ``Least Recently Used`` (LRU) achieve an optimal competitive ratio of $\Theta(Zk)$, strengthening the analogy to the $\Theta(k)$-competitiveness of LRU in standard caching.



\subsection{Background}

\newcommand{\SQhit}[3]{
    \node[draw, fill=black, text=white, opacity=1, minimum width=0.25cm, minimum height=0.25cm] (#1#2) at (#1, #2) {#3};
}
\newcommand{\SQdelhit}[3]{
    \node[draw, fill=gray, text=white, opacity=1, minimum width=0.25cm, minimum height=0.25cm] (#1#2) at (#1, #2) {#3};
}
\newcommand{\SQmiss}[3]{
    \node[draw, fill=white, text=black, opacity=1, minimum width=0.25cm, minimum height=0.25cm] (#1#2) at (#1, #2) {#3};
}

\newcommand{\Checkmark}[3] {
  \fill[draw, scale=1, fill=#3](#1 + 0.05, #2 + .30) -- (#1 + .26, #2 + 0 ) -- (#1 + 1, #2 + .7) -- (#1 + .25, #2+ .17) -- cycle;
}

\newcommand{\cmark}{\ding{51}}%
\newcommand{\xmark}{\ding{55}}%
\newcommand{\lambdaDist}{1.5}
\tikzset{cylinder end fill/.style={path picture={
\pgftransformshift{\centerpoint}%
\pgftransformrotate{\rotate}%
\pgfpathmoveto{\beforetop}%
\pgfpatharc{90}{-270}{\xradius and \yradius}%
\pgfpathclose
\pgfsetfillcolor{#1}%
\pgfusepath{fill}}
}}
   \def\cacheShift{2.5}
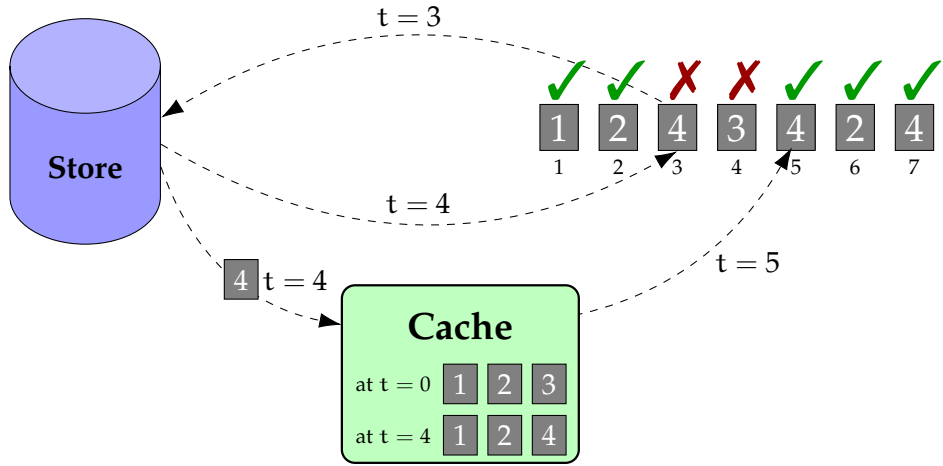
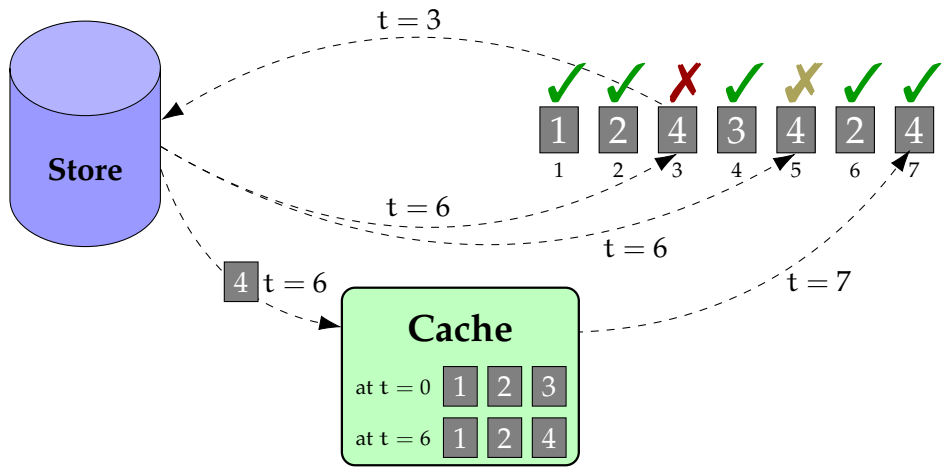
\begin{figure*}
\begin{subfigure}{1.0\textwidth} 
    \centering
    \begin{tikzpicture}[scale=0.525]

    \node (Store) [cylinder, shape border rotate=90, draw,minimum height=3cm,minimum width=2cm, 
    fill=blue!40!white,
    cylinder end fill=blue!30!white] at (0, 0)
{\large \textbf{Store}};

    \def\lambda{1.5}
    \SQdelhit{12}{1.0}{\Large 1};
    \SQdelhit{12+1*\lambda}{1.0}{\Large 2};
    \SQdelhit{12+2*\lambda}{1.0}{\Large 4};
    \SQdelhit{12+3*\lambda}{1.0}{\Large 3};
    \SQdelhit{12+4*\lambda}{1.0}{\Large 4};
    \SQdelhit{12+5*\lambda}{1.0}{\Large 2};
    \SQdelhit{12+6*\lambda}{1.0}{\Large 4};
        \foreach \i in {1,...,7}{
        \node (\i) at (12 + \lambda*\i - \lambda, 0.0) {\scriptsize $\i$};
    }

    \coordinate (RightMiddle) at ($(Store.east)!0.65!(Store.north east)$);
    
    \draw[->, dashed, bend right, >=Latex, >={Latex[width=2mm,length=3mm]}] (12 + 2*\lambda-0.4, 1.65) to node[midway, above] {$t = 3$} (RightMiddle) ;

 \coordinate (RightMiddle-) at ($(Store.east)!0.3!(Store.north east)$);

 \coordinate (RightMiddle--) at ($(Store.east)!0.0!(Store.north east)$);

     \draw[->, dashed, bend right, >=Latex, >={Latex[width=2mm,length=3mm]}, ] (RightMiddle-) to node[midway, above] {$t = 4$} (12 + 2*\lambda, 0.4) ;

      \draw[->, dashed, bend right, >=Latex, >={Latex[width=2mm,length=3mm]}, ] (RightMiddle--) to node[pos=0.55, right, xshift=4pt] {$\, t = 4$} (4+ \cacheShift, -4) ;

      \SQdelhit{1.45+ \cacheShift}{-2.85}{4};

      \draw[->, dashed, bend right, >=Latex, >={Latex[width=2mm,length=3mm]}, ] (10, -4) to node[pos=0.6, right, xshift=5pt] {$t = 5$} (12+4*\lambda-0.1, 0.5) ;

    \draw[rounded corners=5pt, line width=0.3mm, draw=black, fill=green!25!white] (4 + \cacheShift,-3) rectangle (10+ \cacheShift,-7.5);

    \node (cache) at (7+ \cacheShift, -4) {{\Large \textbf{Cache}}};
    \node (cacheText) at (5.3+ \cacheShift, -5.5) {\scriptsize at $t = 0$};
    \node (cacheText2) at (5.3+ \cacheShift, -6.8) {\scriptsize at $t = 4$};
    
    \SQdelhit{7+ \cacheShift}{-5.5}{1};
    \SQdelhit{7+0.75*\lambda+ \cacheShift}{-5.5}{2};
    \SQdelhit{7+1.5*\lambda+ \cacheShift}{-5.5}{3};

    \SQdelhit{7+ \cacheShift}{-6.8}{1};
    \SQdelhit{7+0.75*\lambda+ \cacheShift}{-6.8}{2};
    \SQdelhit{7+1.5*\lambda+ \cacheShift}{-6.8}{4};

    \node[color=green!60!black] (c1) at (12 + \lambda*0 + 0.2 , 1.45 + 0.75) {\huge \cmark};
     \node[color=green!60!black] (c2) at (12 + \lambda*1 + 0.2 , 1.45 + 0.75) {\huge \cmark};
      \node[color=red!60!black] (c3) at (12 + \lambda*2 + 0.2 , 1.45 + 0.75) {\huge \xmark};
      \node[color=red!60!black] (c4) at (12 + \lambda*3 + 0.2 , 1.45 + 0.75) {\huge \xmark};
      \node[color=green!60!black] (c5) at (12 + \lambda*4 + 0.2 , 1.45 + 0.75) {\huge \cmark};
      \node[color=green!60!black] (c6) at (12 + \lambda*5 + 0.2 , 1.45 + 0.75) {\huge \cmark};
      \node[color=green!60!black] (c6) at (12 + \lambda*6 + 0.2 , 1.45 + 0.75) {\huge \cmark};
         \end{tikzpicture}
         \caption{Classical caching. ($Z=1$).}
\end{subfigure}

\vspace{10pt}

\begin{subfigure}{1.0\textwidth} 
    \centering
    \begin{tikzpicture}[scale=0.525]

    \node (Store) [cylinder, shape border rotate=90, draw,minimum height=3cm,minimum width=2cm, 
    fill=blue!40!white,
    cylinder end fill=blue!30!white] at (0, 0)
{\large \textbf{Store}};

    \def\lambda{1.5}
    \SQdelhit{12}{1.0}{\Large 1};
    \SQdelhit{12+1*\lambda}{1.0}{\Large 2};
    \SQdelhit{12+2*\lambda}{1.0}{\Large 4};
    \SQdelhit{12+3*\lambda}{1.0}{\Large 3};
    \SQdelhit{12+4*\lambda}{1.0}{\Large 4};
    \SQdelhit{12+5*\lambda}{1.0}{\Large 2};
    \SQdelhit{12+6*\lambda}{1.0}{\Large 4};
        \foreach \i in {1,...,7}{
        \node (\i) at (12 + \lambda*\i - \lambda, 0.0) {\scriptsize $\i$};
    }

    \coordinate (RightMiddle) at ($(Store.east)!0.65!(Store.north east)$);
    
    \draw[->, dashed, bend right, >=Latex, >={Latex[width=2mm,length=3mm]}] (12 + 2*\lambda-0.4, 1.65) to node[midway, above] {$t = 3$} (RightMiddle) ;

 \coordinate (RightMiddle-) at ($(Store.east)!0.3!(Store.north east)$);

  \coordinate (RightMiddle--) at ($(Store.east)!0.0!(Store.north east)$);

     \draw[->, dashed, bend right, >=Latex, >={Latex[width=2mm,length=3mm]}, ] (RightMiddle-) to node[midway, above] {$t = 6$} (12 + 2*\lambda, 0.4) ;


      \draw[->, dashed, bend right, >=Latex, >={Latex[width=2mm,length=3mm]}, ] (RightMiddle--) to node[pos=0.55, right, xshift=4pt] {$\, t = 6$} (4+ \cacheShift, -4) ;

      \SQdelhit{1.45+ \cacheShift}{-2.85}{4};

       \draw[->, dashed, bend right, >=Latex, >={Latex[width=2mm,length=3mm]}, ] (RightMiddle-) to node[pos=0.75, below, yshift=-2pt] {$t = 6$} (12 + 4*\lambda, 0.4) ;


      \draw[->, dashed, bend right, >=Latex, >={Latex[width=2mm,length=3mm]}, ] (10, -4) to node[pos=0.6, right, xshift=5pt] {$t = 7$} (12+6*\lambda-0.1, 0.5) ;

  \draw[rounded corners=5pt, line width=0.3mm, draw=black, fill=green!25!white] (4 + \cacheShift,-3) rectangle (10 + \cacheShift,-7.5);

    \node (cache) at (7 + \cacheShift, -4) {{\Large \textbf{Cache}}};
    \node (cacheText) at (5.3 + \cacheShift, -5.5) {\scriptsize at $t = 0$};
    \node (cacheText2) at (5.3 + \cacheShift, -6.8) {\scriptsize at $t = 6$};
    
    \SQdelhit{7+ \cacheShift}{-5.5}{1};
    \SQdelhit{7+0.75*\lambda+ \cacheShift}{-5.5}{2};
    \SQdelhit{7+1.5*\lambda+ \cacheShift}{-5.5}{3};

    \SQdelhit{7+ \cacheShift}{-6.8}{1};
    \SQdelhit{7+0.75*\lambda+ \cacheShift}{-6.8}{2};
    \SQdelhit{7+1.5*\lambda+ \cacheShift}{-6.8}{4};

    \node[color=green!60!black] (c1) at (12 + \lambda*0 + 0.2 , 1.45 + 0.75) {\huge \cmark};
     \node[color=green!60!black] (c2) at (12 + \lambda*1 + 0.2 , 1.45 + 0.75) {\huge \cmark};
      \node[color=red!60!black] (c3) at (12 + \lambda*2 + 0.2 , 1.45 + 0.75) {\huge \xmark};
      \node[color=green!60!black] (c4) at (12 + \lambda*3 + 0.2 , 1.45 + 0.75) {\huge \cmark};
      \node[color=yellow!60!black] (c5) at (12 + \lambda*4 + 0.2 , 1.45 + 0.75) {\huge \xmark};
      \node[color=yellow!60!black] (c5) at (12 + \lambda*4 + 0.2 , 1.45 + 0.75) {\huge \cmark};
      \node[color=green!60!black] (c6) at (12 + \lambda*5 + 0.2 , 1.45 + 0.75) {\huge \cmark};
      \node[color=green!60!black] (c6) at (12 + \lambda*6 + 0.2 , 1.45 + 0.75) {\huge \cmark};
         \end{tikzpicture}
         \caption{Caching with delayed hits, $Z = 3$.}
\end{subfigure}









    \caption{Comparison of the classical caching model and the delayed hits model. Each request has its arrival time underneath and a symbol above it representing whether it was a hit (\ding{51}), a miss (\ding{55}), or a delayed hit (\ding{51} + \ding{55}).
    The cache starts out containing $1,2,3$, and in both cases the first request for $4$ is a miss. In both cases, the caching policy decides to evict page $3$ in order to cache page $4$, but the resulting latencies are different as described next. In standard caching, that decision results in a miss for page $3$ at time $4$, and a total latency of $2$. In the delayed hits model, however, page $4$ arrives from the store at time $6$, resulting in a latency of $3$ for the request at $t=3$, and a delayed hit for the request at time $5$ with a latency of $1$, accruing a total of latency $4$. }
    \label{fig:dhits-ex}
\end{figure*}

We begin by providing more precise problem statements for the caching problems described above.

\textbf{The $\kcache$ problem.} Given a universe of $n$ possible memory pages, and a $k$-sized cache, i.e., a subset $S \subseteq [n]$ of size $k$, at every time-step $t$, we get a request $\rseq(t)$, for a page $i\in[n]$. If $i\in S$, we say that the cache has a \emph{``hit''}; otherwise, the cache has a \emph{``miss''} and the item is retrieved from a backing store. The caching algorithm then decides whether or not to cache the newly retrieved item. As this model assumes the cache is always at full capacity (i.e., containing $k$ pages), if the caching algorithm decides to cache the newly retrieved item $\rseq(t)$,  then it must choose an element of $S$ to evict in order to make room for $\rseq(t)$. The objective of caching policies in this model is to minimize the total number of misses.

\begin{tcolorbox}[colframe=blue!50!black, colback=blue!10!white, title=Example 2, rounded corners, left=1pt, right=1pt]
Consider a cache with size $k=3$ and $n=5$ pages. Assume the cache starts by containing pages $1,2,4$, and we receive a request sequence $\rseq = (2, 5, 1, 4, 2, 2)$. Namely, in the first time-step, we receive a request for $2$, and as it is a \emph{hit} we can directly serve it. Next, we receive a request for $5$. As $5$ is not in the cache it is a \emph{miss}. We put $5$ in cache, and must evict one of $1, 2, 4$ in its place. Say, we evict $4$.  Next, at $t=3$, we receive a request for $1$, which is a hit. Then a request for $4$, which would be a miss as we evicted $4$. Say we evict $5$ here. The next two requests, both for $2$, are hits. Thus, our policy would incur 2 misses over this sequence.
\end{tcolorbox}

In 1966, B\'el\'ady~\cite{Bel66} showed that the optimal policy for the $\kcache$ problem is to evict the page which is to be requested \emph{farthest in the future (FIF)}.


However, B\'el\'ady's rule only applies when the complete request sequence $\rseq$ is known in advance. We generally do not have access to this information from the future, and must make caching decisions \emph{online}, with the information available to us at the time of each request. In this online setting, many heuristic caching policies are employed, such as 
\begin{itemize}
    \item $\PLRU$ (evict the Least Recently Used page first),
    \item $\PFIFO$ (First In First Out), and
    \item $\PLFU$ (evict the Least Frequently Used page first).
\end{itemize}
These policies effectively strive to be heuristic estimators of the \emph{time-to-next-access} ranking of pages in cache, in order to mirror the behavior of FIF in obtaining the least cost (number of cache misses). 

While we can deduce performance comparisons of these policies by measuring their cache hit ratios on benchmark workloads (in fact, we present such benchmarks in~\Cref{sec:evaluation}), the conclusions provided by this approach ought to be taken cautiously, since they can strongly vary based on particular patterns in the benchmarking data (e.g., what if the benchmarks suggest policy $X$ is great only because they do not include request sequences containing some rare specific conditions that makes $X$ perform very poorly?). Therefore, our main focus in this article is to gain a theoretical understanding of the worst case of caching policies. Following the standard in the literature, we do this by focusing on the \emph{``competitive ratio''} of caching policies, which corresponds not to the worst case input in terms of the policy's cost but rather to the worst case input in terms of the ratio of how the policy performed and an optimal policy would have performed. 

\begin{definition}[Competitive ratio] The competitive ratio of a caching policy $\Alg$ over a request sequence $\rseq$ is defined as
\[
    \CR(\rseq, \Alg) = \frac{\cost( \rseq, \Alg)}{\cost(\rseq)}, 
\]
where $\cost(\rseq) := \min_\Alg \cost(\rseq, \Alg))$ denotes the cost inherent to the request sequence. The general competitive ratio of the policy is defined as
\[
    \CR(\Alg) = \lim_{m \to \infty}\sup_{|\rseq| = m} \CR(\rseq, \Alg).
\]
\end{definition}

In classical caching, the cost function corresponds simply to the number of misses incurred.
In this model, Sleator and Tarjan~\cite{ST85} showed that $\PLRU$ has a competitive ratio of $O(k)$ in the $\kcache$ problem\footnote{This result extends to a broad class of online policies, known as \emph{marking algorithms}~\cite{OnlineAlgorithms2017a}.}, and that no deterministic online policy could do better. 
Later on, Fiat~\emph{et al.}~\cite{FKL+91} showed that by using randomness one could obtain $O(\log k)$-competitiveness, and that this result was tight.

The delayed hits setting introduced by Atre \emph{et al.}~\cite{ASWB20} examines \emph{latency} (which will be described precisely next) as the cost function. In addition to this, there are a few more significant changes which we outline below.

\textbf{The $\Zkcache$ problem.} 
Let us begin with a rough explanation, that is then formalized as a \emph{finite state machine} (FSM) in \Cref{sec:model} to avoid any ambiguities.
Given $n$ pages and $k$-sized cache $S\subseteq[n]$, at each time-step $t$, we get a request $\rseq(t)=i\in[n]$. If $\rseq(t) \in S$, the cache has a hit, and incurs $0$ latency. On the other hand, if $\rseq(t) \not\in S$, there are two cases: (i) either page $\rseq(t)$ has not been requested in the last $Z$ time steps, in which case it must be requested from the backing store, whereupon it will take $Z$ time steps to reach the cache, resulting in a latency of $Z$, or (ii) that same page $\rseq(t)$ has already been requested some $z < Z$  time steps ago; in that case, the previous request already ordered the page from the backing store, and thus the current request incurs a \emph{delayed hit} with latency $Z-z$. Whether pages have been requested to the backing store or not will be kept in a queue. Once a fetched page arrives from the backing store, a caching policy needs to decide whether to cache it or not, and if it caches, then it must choose some other page to evict from the cache. A small example is outlined in \Cref{fig:dhits-ex}.

There are a few important ways to modify the problem above, such as the \emph{anti-monotonocity} condition noted by Manohar and Williams~\cite{MW20}, as well as the optional \emph{bypassing} feature discussed by Zhang \emph{et al.}~\cite{ZTL+22}. In the interest of clarity, we precisely define the $\Zkcache$ problem as a \emph{finite state machine} (FSM) in \Cref{sec:model}.






\section{Finite-state machine model}\label{sec:model}

\newcommand{\vCache}[1]{S^{(#1)}}
\newcommand{\vStatus}[1]{\sigma^{(#1)}}
\newcommand{\vLatencyChange}[1]{\Delta L^{(#1)}}
\newcommand{\vReq}[1]{r_{#1}}
\newcommand{\vEvict}[1]{e_{#1}}

\newcommand{\sHit}{{\texttt{Hit}}}
\newcommand{\sDHit}{{\texttt{DelayedHit}}}
\newcommand{\sMiss}{{\texttt{Miss}}}
\newcommand{\sEvict}{{\texttt{Eviction}}}

\newcommand{\fSwap}[2]{\texttt{Swap}(#1,#2)}

Here, we present a model of delayed hits caching as a formal \emph{finite state machine}, a comparatively reader-friendly alternative to the  original formulation in terms of an integer linear program (ILP) and minimum-cost multi-commodity flow (MCMCF) problem in \cite[{\S}A.1-2]{ASWB20}.

\columnratio{0.45}
\begin{paracol}{2}
\subsection{Model description}

For $k \in \mathbb{N}$, let $[k]$ denote the natural numbers $\{1,\ldots,k\}$.

Our model takes as input an ``instance'' of the caching problem (a sequence of $T$ requests for pages in $[n]$) and a ``schedule'' of when evictions happen (a sequence of $T$ pages in $[n]$ or $\bot$ symbols representing ``no eviction'') and outputs the total latency resulting from the schedule, or $\infty$ if this schedule is invalid.

At each point in time $t \in [T]$, our model defines two auxiliary objects. These are:
\begin{itemize}
    \item $\vCache{t} \subseteq [n]$ is the \emph{cache} at time $t$, a set of pages. Initially, $\vCache{0} = [k]$.
    \item $\vStatus{t} \in \{\emptyset,\sMiss,\sDHit,\sHit\}$ indicates the status of the request at time $t$. $\emptyset$ is a special initialization symbol and $\vStatus{-(Z-1)},\ldots,\vStatus{0} = \emptyset$ while $\vStatus{t} \in \{\sMiss,\sDHit,\sHit\}$ for all $t \geq 1$. 
\end{itemize}
At each time step, our model produces a quantity $\vLatencyChange{t} \in \mathbb{N}$, the additional latency incurred by the request at time $t$, and the final output of the model is either the total latency $\sum_{t=1}^T \vLatencyChange{t} < \infty$ (a ``valid'' schedule) or $\infty$ (an ``invalid'' schedule).
One syntactic difference between our model and the original model of \cite{ASWB20} is that in our model, the latency of a miss or delayed hit is counted at the time of its request, not at the time of its arrival from the queue into the cache.

\switchcolumn

\begin{tcolorbox}[colframe=blue!50!black, colback=blue!10!white, left=0pt,right=0pt, title=FSM for~$\Zkcache$ ]
\begin{algorithmic}[1]
    \Statex \textbf{Parameters:} $Z,k \in \mathbb{N}$.
    \Statex \textbf{Instance size:} $n,T\in\mathbb{N}$.
    \Statex \textbf{Instance:} $\rseq = (\vReq{1},\ldots,\vReq{T}) \in [n]^T$. 
    \Statex \textbf{Eviction schedule:} $\esched = (\vEvict{1},\ldots,\vEvict{T}) \in ([n] \cup \{\bot\})^T$.
    
    \Statex
    \State $\vCache{0} \gets [k]$\label{line:fsm:init-cache}
    \State $\vStatus{-(Z-1)},\ldots,\vStatus{0} \gets \emptyset$
    
    \Statex

    \For{$t=1,\ldots,T$}
        \State{// Update the cache if evicting}

        \If{$\vStatus{t-Z} = \sMiss$}
            \If{$\vEvict{t} \not\in \vCache{t-1} \cup \{\vReq{t-Z}\}$}
            \State \Return $\infty$
            \EndIf
            
            \State $\vCache{t} \gets (\vCache{t-1} \cup \{\vReq{t-Z}\}) \setminus \{\vEvict{t}\}$\label{line:fsm:update-cache}
        \Else

            \If{$\vEvict{t} \neq \bot$}
            \State \Return $\infty$
            \EndIf
        
            \State $\vCache{t} \gets \vCache{t-1}$
            
        \EndIf

        \Statex

        \State{// Process the new request}
        \If{$\vReq{t} \in \vCache{t}$}\label{line:fsm:hit:check}
        
        \State{$\vStatus{t} \gets \sHit$}
        \State{$\vLatencyChange{t} \gets 0$}\label{line:fsm:hit:penalty}
        
        \ElsIf{$\exists i \in [Z-1]$ s.t. $\vStatus{t-i} = \sMiss$ and $\vReq{t-i} = \vReq{t}$}\label{line:fsm:delayed-hit:check}

        \State{$\vStatus{t} \gets \sDHit$}
        \State{$\vLatencyChange{t} \gets Z-i$}\label{line:fsm:delayed-hit:penalty}

        \Else
        
        \State{$\vStatus{t} \gets \sMiss$}
        \State{$\vLatencyChange{t} \gets Z$}\label{line:fsm:miss:penalty}

        \EndIf

        \Statex
    \EndFor

    \Statex
    \State \Return $\sum_{t=1}^T \vLatencyChange{t}$
\end{algorithmic}
\end{tcolorbox}

An example is provided in \Cref{tab:example}.
\end{paracol}
\newcommand{\sMissSh}{\texttt{M}}
\newcommand{\sHitSh}{\texttt{H}}

\begin{table}
    \centering
    \begin{tabular}{c|c|c|c|c|c|c|c}
       Timestep $t$         & $0$       & $1$       & $2$       & $3$       & $4$       & $5$       & $6$       \\ \hline
       Request $r_t$        & ---       & $2$       & $1$       & $2$       & $1$       & $2$       & $1$       \\
       Eviction $\vEvict{t}$& ---       & $\bot$    & $\bot$    & $1$       & $\bot$    & $\bot$    & $2$       \\ \hline
       Cache $\vCache{t}$   & $\{1\}$   & $\{1\}$   & $\{1\}$   & $\{2\}$   & $\{2\}$   & $\{2\}$   & $\{1\}$   \\
       Status $\vStatus{t}$ & $\emptyset$& $\sMissSh$& $\sHitSh$ & $\sHitSh$ & $\sMissSh$& $\sHitSh$ & $\sHitSh$\\
       Latency $\vLatencyChange{t}$ &---& $2$       & $0$       & $0$       & $2$       & $0$       & $0$
    \end{tabular}
    \caption{An example of running the delayed-hits finite state machine on a request sequence of length $T=6$, with delay $Z=2$, cache size $k=1$, on a universe of size $n=2$. In this particular example, every request is either a $\sMiss$ ($\sMissSh$ for short) or a $\sHit$ ($\sHitSh$ for short). Interestingly, in the traditional \emph{no-delay} caching model (i.e., $Z=1$), this request sequence is ``harder'': Every schedule results in at least $3$ misses. (One schedule with exactly $3$ misses is simply to keep $1$ in the cache the entire time.) Finally, we remark that after the $T=6$ timesteps, the cache has reset to its original state and there are no $\sMiss$es in the last $2$ timesteps, so this request sequence and eviction schedule may be repeated and the state is periodic.}
    \label{tab:example}
\end{table}

\subsection{Model properties}

We highlight several important properties of the model. First, we observe that a schedule is valid if and only if the following two conditions hold:

\begin{itemize}
    \item For every $t \in [T]$, if $\vEvict{t} \neq \bot$ then $\vStatus{t-Z} = \sMiss$.
    \item If $\vStatus{t-Z} = \sMiss$, then $\vEvict{t} \in \vCache{t-1} \cup \{\vReq{t-Z}\}$.
\end{itemize}


\begin{proposition}\label{prop:dhit-to-miss}
    For every $t \in [T]$, there can be at most one index $i \in [Z-1]$ such that $\vStatus{t-i} = \sMiss$ and $\vReq{t-i} = \vReq{t}$.
\end{proposition}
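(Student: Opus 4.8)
The plan is to argue by contradiction. Suppose there are two distinct indices $i, j \in [Z-1]$, say $i < j$, with $\vStatus{t-i} = \vStatus{t-j} = \sMiss$ and $\vReq{t-i} = \vReq{t-j} = \vReq{t}$. I will contradict the fact that the FSM assigned status $\sMiss$ — rather than $\sDHit$ — to timestep $t-i$.

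First I would record two bookkeeping facts. Since $\vStatus{s} = \emptyset \neq \sMiss$ for every $s \le 0$, the hypothesis $\vStatus{t-i} = \vStatus{t-j} = \sMiss$ forces $t - i \ge 1$ and $t - j \ge 1$, so both timesteps genuinely execute the body of the \texttt{for} loop, and their statuses are determined by the case analysis on \Cref{line:fsm:hit:check,line:fsm:delayed-hit:check}. Next, put $i' := j - i$: from $1 \le i < j \le Z-1$ we get $1 \le i' \le Z-2 \le Z-1$, hence $i' \in [Z-1]$, and $(t-i) - i' = t-j$.

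Now I would examine why timestep $t-i$ received status $\sMiss$. Reaching the \texttt{else} branch on \Cref{line:fsm:miss:penalty} requires that the delayed-hit test on \Cref{line:fsm:delayed-hit:check}, evaluated at time $t-i$, failed: there is \emph{no} $i' \in [Z-1]$ with $\vStatus{(t-i)-i'} = \sMiss$ and $\vReq{(t-i)-i'} = \vReq{t-i}$. But the index $i' = j-i$ constructed above is exactly such a witness, since $\vStatus{(t-i)-i'} = \vStatus{t-j} = \sMiss$ and $\vReq{(t-i)-i'} = \vReq{t-j} = \vReq{t} = \vReq{t-i}$. This is the desired contradiction, so at most one such index $i$ can exist.

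I don't expect a genuine obstacle; the only care needed is in the index arithmetic — specifically checking that $j - i$ really lands in $[Z-1]$, which relies on $j \le Z-1$ (not merely $j \le Z$), and noting that the two timesteps in question are at least $1$ so that the FSM's case split actually governs them. (Equivalently, one can avoid the ``WLOG'' by taking $i$ to be the smallest index in $[Z-1]$ with $\vStatus{t-i}=\sMiss$ and $\vReq{t-i}=\vReq{t}$, then arguing that any larger such index would force $t-i$ to be a delayed hit; I would use whichever phrasing reads more cleanly.)
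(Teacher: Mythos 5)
Your proposal is correct and is essentially the paper's own argument: both take two distinct indices (your $i<j$, the paper's $i_1<i_2$), observe that the later of the two timesteps $t-i$ has the earlier $\sMiss$ at $t-j$ as a witness for the delayed-hit check at distance $j-i \in [Z-1]$, and conclude that $t-i$ could not have been assigned $\sMiss$. The extra bookkeeping you include (that $t-j\geq 1$ and that $j-i$ lands in $[Z-1]$) is a harmless refinement of the same proof.
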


\begin{proof}
    Indeed, suppose there were two such distinct indices $i_1 < i_2 \in [Z-1]$. Then $t-i_1$ must be a $\sDHit$ (i.e., $\vStatus{t-i_1} = \sDHit$), since $\vStatus{(t-i_1)-(i_2-i_i)} = \sMiss$ and $\vReq{(t-i_1)-(i_2-i_1)} = \vReq{t-i_1} = \vReq{t} = \vReq{t-i_2}$.
\end{proof}

Thus, there can be at most one index $i \in [Z-1]$ satisfying the condition for the check for a $\sDHit$ on \Cref{line:fsm:delayed-hit:check}, and there is a function mapping each $\sDHit$ to an associated $\sMiss$ within the last $Z$ time steps.

\begin{proposition}\label{prop:cache-size}
    For all $t \in [T]$, if $\vStatus{t-Z} = \sMiss$, then $\vReq{t-Z}\not\in\vCache{t-1}$.
\end{proposition}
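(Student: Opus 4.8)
The plan is to exhibit an invariant: the page $\vReq{t-Z}$, which is absent from the cache at time $t-Z$ because its request there is a $\sMiss$, cannot re-enter the cache during the $Z-1$ intervening timesteps $t-Z+1,\dots,t-1$. The engine of the argument is the observation that the cache gains a page only via \cref{line:fsm:update-cache}, and then only the specific page $\vReq{t'-Z}$ whose request $Z$ steps earlier was a $\sMiss$; so I would enumerate the timesteps at which $\vReq{t-Z}$ could possibly be re-inserted and rule out each one.

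First I would extract two consequences of the hypothesis $\vStatus{t-Z} = \sMiss$. Since the hit check on \cref{line:fsm:hit:check} failed at time $t-Z$, we get $\vReq{t-Z} \notin \vCache{t-Z}$. Since the FSM then reached the $\sMiss$ branch rather than the $\sDHit$ branch, the delayed-hit check on \cref{line:fsm:delayed-hit:check} also failed, i.e.\ there is no $i \in [Z-1]$ with $\vStatus{(t-Z)-i} = \sMiss$ and $\vReq{(t-Z)-i} = \vReq{t-Z}$; equivalently, none of the requests at times $t-2Z+1,\dots,t-Z-1$ that equal $\vReq{t-Z}$ is a $\sMiss$. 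Then I would prove by induction on $s'$ ranging over $t-Z, t-Z+1, \dots, t-1$ that $\vReq{t-Z} \notin \vCache{s'}$. The base case $s' = t-Z$ is the first consequence. For the step with $t-Z < s' \le t-1$: the FSM sets $\vCache{s'}$ either to $\vCache{s'-1}$ (and we are done by the inductive hypothesis), or, when $\vStatus{s'-Z} = \sMiss$, to $(\vCache{s'-1} \cup \{\vReq{s'-Z}\}) \setminus \{\vEvict{s'}\}$. In this latter case $s'-Z \in \{t-2Z+1,\dots,t-Z-1\}$ and $\vStatus{s'-Z} = \sMiss$, so the second consequence forces $\vReq{s'-Z} \ne \vReq{t-Z}$; hence the only page possibly inserted is not $\vReq{t-Z}$, and with $\vReq{t-Z} \notin \vCache{s'-1}$ we conclude $\vReq{t-Z} \notin \vCache{s'}$. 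Taking $s' = t-1$ yields the proposition.

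The only step needing care is the index bookkeeping — observing that the window of timesteps $s'$ at which $\vReq{t-Z}$ could be re-inserted corresponds, via $s' \mapsto s'-Z$, exactly to the $Z-1$-step delayed-hit lookback window preceding $t-Z$, which is precisely what the failed delayed-hit check controls. Everything else is a direct unrolling of the FSM. I would also remark that a would-be re-insertion at the miss's own timestep $s' = t-Z$ (via $\vStatus{t-2Z} = \sMiss$ and $\vReq{t-2Z} = \vReq{t-Z}$) is already subsumed by the base case: the failure of the hit check at $t-Z$ guarantees $\vReq{t-Z} \notin \vCache{t-Z}$ no matter what is evicted there.
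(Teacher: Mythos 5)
Your proof is correct and is essentially the paper's argument: both rest on the observations that the failed hit check at time $t-Z$ gives $\vReq{t-Z}\notin\vCache{t-Z}$, and that any re-insertion of $\vReq{t-Z}$ at a time $s'\in\{t-Z+1,\dots,t-1\}$ would require $\vStatus{s'-Z}=\sMiss$ with $\vReq{s'-Z}=\vReq{t-Z}$ and $t-s'\in[Z-1]$, which the delayed-hit check at time $t-Z$ rules out. You run this as a forward induction while the paper phrases it as a contradiction at the first re-insertion time, but the decomposition and the key facts used are identical.
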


\begin{proof}
    Since $\vStatus{t-Z} = \sMiss$, we know $\vReq{t-Z} \not\in \vCache{t-Z}$. Suppose for contradiction that $\vReq{t-Z} \in \vCache{t-1}$. Then there is some $i \in [Z-1]$ such that $\vReq{t-Z} \not\in \vCache{t-i-1}$ but $\vReq{t-Z} \in \vCache{t-i}$. For this to happen, we need $\vReq{t-i-Z} = \vReq{t-Z}$ and $\vStatus{t-i-Z} = \sMiss$. But this, in turn, guarantees that $\vStatus{t-Z} = \sDHit$.
\end{proof}

As a corollary, note that if $\vStatus{t-Z} = \sMiss$, then $|\vCache{t}| = |(\vCache{t-1} \cup \{\vReq{t-Z}\})\setminus\{\vEvict{t}\}|$ which is $|\vCache{t-1}|$ by the proposition. Otherwise, $|\vCache{t}| = |\vCache{t-1}|$ trivially; thus, $|\vCache{t}| = k$ for all $t$ since initially $\vCache{0} = [k]$.

Other important properties include:

\begin{itemize}
    \item In the case $Z=1$, our model recovers the usual model of caching without delayed hits. Indeed, the \textbf{if} statement on \Cref{line:fsm:delayed-hit:check} is vacuous since $[Z-1]=\emptyset$. Thus, when the request at time $t$ is processed, it is always a $\sHit$ or a $\sMiss$, depending on whether the requested page was in the queue. In the time step $t+1$, before request $t+1$ is processed, we can perform an eviction if request $t$ was a $\sMiss$. In this case, we add $\vReq{t}$ to the cache, evicting some chosen element.
    \item For each $t \in [T]$, the values of $\vCache{t}$ and $\vStatus{t}$ are determined using only current values $\vReq{t}$ and $\vEvict{t}$ as well as values $\vCache{t-1}$, $\vStatus{t-1},\ldots,\vStatus{t-(Z-1)}$, and $\vReq{t-1},\ldots,\vReq{t-(Z-1)}$ \emph{from the past $Z$ timesteps}. Indeed, we could view the model as an automaton: At each step, the model takes an input $(\vReq{t},\vEvict{t})$ and previous state
    \[
    (\vCache{t-1}, \vStatus{t-1},\ldots,\vStatus{t-(Z-1)},\vReq{t-1},\ldots,\vReq{t-(Z-1)})
    \]
    and produces a new state
    \[
    (\vCache{t}, \vStatus{t}, \ldots,\vStatus{t-(Z-2)},\vReq{t},\ldots,\vReq{t-(Z-2)}).
    \]
    For fixed $Z$, $k$, and $n$, there are only finitely many possible states. This justifies our ``finite state machine'' terminology.
\end{itemize}

\subsection{Policies and optimization}

Given our definition of the model, we can now define the problem of optimizing delayed hits caching schedules:

\begin{definition}[$\Zkcache$]\label{def:dhits}
    For $Z,k\in\mathbb{N}$, the \emph{delayed hits caching problem with delay $Z$ and cache size $k$}, denoted $\Zkcache$, is defined as follows. Given a number of \emph{pages} $n$ and a number of \emph{time steps} $T$, an \emph{instance} of $\Zkcache$ is a sequence of \emph{requests} for pages $\rseq = (\vReq{1},\ldots,\vReq{T}) \in [n]^T$. An \emph{eviction schedule} is a corresponding sequence of pages $\esched = (\vEvict{1},\ldots,\vEvict{T}) \in ([n] \cup \{\bot\})^T$. The \emph{latency} of the schedule $\esched$ on the instance $\rseq$, denoted $\cost(\rseq,\esched)$, is defined as the value returned by the finite state machine above. The \emph{optimal latency} of the instance $\rseq$, denoted $\cost(\rseq)$, is the minimum latency of any schedule, $\min_\esched \cost(\rseq,\esched)$.
\end{definition}

For a number of pages $n$ and a number of time steps $T$, a (deterministic) \emph{caching policy} is a function $\Alg : [n]^T \to ([n] \cup \{\bot\})^T$ mapping an instance to an eviction schedule. Thus, a caching policy is a purported solution to $\Zkcache$. We are interested in two specific policies:

\begin{definition}[Optimal policy]
    For $n,T \in \mathbb{N}$, the \emph{optimal policy} $\POPT : [n]^T \to ([n] \cup \{\bot\}^T)$, given any request sequence $\rseq \in [n]^T$, outputs an arbitrary valid schedule $\POPT(\rseq)$ of minimal latency $\cost(\rseq,\POPT(\rseq)) = \cost(\rseq)$.
\end{definition}

\newcommand{\tlast}[3]{t_{\texttt{last}}(#1,#2,#3)}

Given a request sequence $\rseq \in [n]^T$, a page $p \in [n]$, and a time $t \in [T]$, let $\tlast{\rseq}{p}{t} := \max\{t' < t : \vReq{t'} = p\}$ denote the last time before $t$ that $p$ was requested (or $-\infty$ if no such time exists).

\begin{definition}[LRU policy]
    For $n,T \in \mathbb{N}$, the \emph{Least-Recently-Used (LRU) policy} $\PLRU : [n]^T \to ([n] \cup \{\bot\})^T$, given any request sequence $\rseq \in [n]^T$, outputs the eviction sequence $\PLRU(\rseq) = (\vEvict{1},\ldots,\vEvict{T})$ defined inductively as follows. At time $t = 1,\ldots,T$, $\vEvict{t} = \bot$ if $\vStatus{t-Z} \neq \sMiss$. Otherwise, $\vEvict{t}$ is a page $s \in \vCache{t-1} \cup \{\vReq{t-Z}\}$ minimizing $\tlast{\rseq}{s}{t}$ among all such pages.
\end{definition}

The LRU policy is \emph{online} in the sense that the eviction schedule's entry at time $t$ is determined only by the request sequence up to time $t$, and not the entire request sequence.

\section{Marking-based upper bound}\label{sec:lru-bound}

\newcommand{\claimmark}{($*$)}
\newcommand{\eLRU}{\esched_{\texttt{LRU}}}
\newcommand{\eOPT}{\esched_{\texttt{OPT}}}

This section states and proves our main theorem.

\begin{tcolorbox}[colback=green!10!white]
\begin{theorem}\label{thm:lru-bound}
    In $\Zkcache$, $\CR(\PLRU)=O(kZ)$.
\end{theorem}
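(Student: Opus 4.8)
The plan is to lift the classical phase-based analysis of marking algorithms to the delayed-hits setting. First I would run $\PLRU$ on the instance $\rseq$ and partition the timeline into \emph{phases}: maximal windows of consecutive requests containing at most $k$ distinct pages, say $m$ phases in total. Because $\PLRU$ is a marking algorithm --- once a page is requested inside a phase, $\PLRU$ never evicts it before the phase ends, since the page just fetched is never the least-recently-used candidate --- $\PLRU$ incurs at most $k$ $\sMiss$es per phase, each costing exactly $Z$; this accounts for the ``number of fetches'' and contributes $O(kmZ)$ to $\cost(\rseq,\PLRU)$. Since the argument uses only the marking property, the final bound will automatically apply to every marking policy.

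Next I would bound $\PLRU$'s \emph{delayed-hit} cost. Inside a phase, once $\PLRU$ misses a page $p$ (necessarily at $p$'s first request in the phase), the fetch completes $Z$ steps later, $p$ enters the cache and --- again by the marking property --- remains there until the phase ends; hence the only non-$\sHit$ requests $\PLRU$ sees for $p$ in the phase are the initial $\sMiss$ and the $\sDHit$s in the following $Z-1$ steps, whose penalties sum to at most $\binom{Z}{2}=O(Z^2)$ per missed page. Naively this gives $\cost(\rseq,\PLRU)=O(kmZ^2)$.

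The crux --- and the step I expect to be the main obstacle --- is that $O(kmZ^2)$ is a factor of $Z$ too large: the analysis must show $\cost(\rseq)=\Omega(mZ)$, i.e.\ that the optimal schedule pays $\Omega(Z)$ per phase on average. The textbook argument (over any phase together with the first request of the next phase, $k+1$ distinct pages occur, so \emph{every} schedule faults at least once) yields only $\Omega(m)$, because a fault may be a cheap $\sDHit$ rather than a full $\sMiss$. I would strengthen it by exploiting \emph{phase lengths}. For a ``long'' phase (length $\ge Z$), the first request of the \emph{next} phase is for a page that does not appear at all inside the long phase, hence was not requested --- in particular not missed --- by \emph{any} schedule during the preceding $Z$ steps; so the forced fault at or just after the boundary of a long phase must be an honest $\sMiss$ of cost $Z$, and long phases contribute $\Omega(Z)$ apiece to $\cost(\rseq)$. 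Dually --- and this is the delicate direction --- a ``short'' phase (length $< Z$) cannot host a delayed-hit burst expensive enough to break the bound: a burst of $\Theta(Z)$ delayed hits requires $\Theta(Z)$ requests to a single page, which together with the remaining $k-1$ distinct pages of the phase would push its length beyond $Z$; the residual short-phase cost (the $O(kZ)$ from misses plus a controlled delayed-hit term) can then be charged against the classical $\Omega(1)$-per-phase bound, or against delayed hits that $\POPT$ provably also incurs --- note that whenever $\POPT$ does \emph{not} keep the burst page cached, its cost on the burst exactly matches $\PLRU$'s. Balancing these two regimes cleanly, and in particular handling phases near the short/long threshold and bursts that straddle a phase boundary, is the technical heart of the proof.

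If the phase bookkeeping becomes unwieldy, the alternative route I would pursue is a direct amortized (potential-function) argument: maintain a potential of the form $\alpha Z\cdot\bigl|\vCache{t}_{\PLRU}\,\triangle\,\vCache{t}_{\POPT}\bigr|$ plus a term measuring the total remaining delayed-hit ``mass'' of pages currently in transit for $\PLRU$, and show $\Delta\cost_{\PLRU}+\Delta\Phi\le O(kZ)\cdot\Delta\cost_{\POPT}$ at every step; there the obstacle is controlling $\Delta\Phi$ precisely on the steps where $\PLRU$ takes a $\sDHit$, which is exactly the phenomenon the phase argument sidesteps.
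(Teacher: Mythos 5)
Your upper bound is essentially the paper's: the same phase decomposition, the marking property of $\PLRU$, and the observation that each page missed in a phase contributes one $\sMiss$ plus a burst of at most $\min\{Z,|P^i|\}$ faulting requests each of cost at most $Z$. You also correctly locate the crux: one must show $\POPT$ pays $\Omega(Z)$ per window of length roughly $Z$, not merely $\Omega(1)$ per phase. But your mechanism for that lower bound has two genuine gaps. For long phases, you claim the fresh first request of the next phase ``must be an honest $\sMiss$.'' It indeed cannot be a $\sDHit$, but it can perfectly well be a $\sHit$ --- $\POPT$ may simply have kept that page cached. In that case the pigeonhole argument only yields that \emph{some} page requested in the phase was absent from the cache at some moment, and chasing this down produces either a $\sMiss$/$\sDHit$ or an \emph{eviction}; an eviction or $\sDHit$ at time $t'$ only certifies a $\sMiss$ at time $t'-Z$ or later, which can lie \emph{before} the phase you are charging (e.g., if the eviction occurs within the first $Z$ steps of the long phase), so the certified $\sMiss$ can escape your window and be double-charged to adjacent phases. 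For short phases, the argument is not yet a proof: charging a phase of length $L<Z$ against the classical $\Omega(1)$-per-phase fault loses a factor of $\min\{Z,L\}$ ($\PLRU$ can pay $kZL$ there while $\POPT$ pays a single $\sDHit$ of cost $1$), and the claim that $\POPT$'s burst cost ``exactly matches'' $\PLRU$'s when it does not cache the burst page is false in general, since $\POPT$ may have issued its $\sMiss$ for that page earlier and so pays strictly cheaper delayed hits.

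The paper closes both gaps with one device your proposal lacks: a second-level grouping of phases into \emph{superphases}, formed by accumulating phases until their total length reaches $Z$ and then appending two further phases as a buffer. Within a superphase, $\sum_i \min\{Z,|P^i|\} = O(Z)$, so $\PLRU$ pays at most $4kZ^2$ there; and the fresh-page/eviction case analysis, applied to the \emph{last two} phases of the superphase, certifies a full $\sMiss$ whose time is at most $Z$ steps earlier and hence --- thanks to the length-$\geq Z$ prefix --- lands inside the superphase and is counted exactly once. You need this grouping (or an equivalent amortization over disjoint windows of length $\Theta(Z)$) for your ``$\Omega(Z)$ on average'' step to become a theorem; as written, neither the long-phase nor the short-phase branch delivers it.
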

\end{tcolorbox}
 
 To do this, throughout this section, fix $Z,k,n,T \in \mathbb{N}$ and a request sequence $\rseq \in [n]^T$. This fixes $\eLRU := \PLRU(\rseq)$ and $\eOPT := \POPT(\rseq)$, the eviction schedules produced by the LRU and optimal policies, respectively. In turn, this fixes $\LLRU := \cost(\rseq,\eLRU)$ and $\LOPT := \cost(\rseq,\eOPT)$, the respective costs of these schedules. The competitive ratio of $\PLRU$ on the instance $\rseq$ is \[ \CR(\rseq,\PLRU) = \frac{\LLRU}{\LOPT}. \] So, it suffices to prove an upper bound on $\LLRU$ and a lower bound on $\LOPT$. We now introduce two ideas required for our analysis.

\subsection{Phases}

To prove \Cref{thm:lru-bound}, our analysis begins by partitioning the set of time steps $[T]$ into successive \emph{phases} $P^1,\ldots,P^\ell$. This is a standard step in analyzing caching algorithms (see e.g. \cite{ST85}). Phases, as a partition of $[T]$, can be defined by iteratively assigning each time step $t \in [T]$ to a phase $P^i$ as follows:

\begin{itemize}
    \item If $t = 1$, then assign it to $P^1$. 
    \item For $1 < t \leq T$, if $t-1$ was assigned to phase $P^i$, then check whether the set of requests $\vReq{t'}$ at times $t'$ already assigned to $P^i$, together with the request $\vReq{t}$, has size $k+1$. If so, assign $t$ to phase $P^{i+1}$, if not, to $P^{i}$.  
\end{itemize}

Note that the partition of $[T]$ into phases depends only on the cache size $k$ and the request sequence $\rseq$, and \emph{not} on the delay $Z$ or on the eviction schedule.

\begin{tcolorbox}[colframe=blue!50!black, colback=blue!10!white, title=Example 3, rounded corners, left=1pt, right=1pt]
Consider a cache size of $k=2$ and the request sequence  \[ \rseq = (1,2,1,3,3,2,1). \] Then, the $T=7$ time steps are partitioned into phases $P^1 = \{1,2,3\}$, $P^2=\{4,5,6\}$, and $P^3 = \{7\}$. 
\end{tcolorbox}

We say a page $p \in [n]$ is \emph{requested} in phase $i$ if there exists $t \in P^i$ such that $\vReq{t} = p$. In every phase except the last, exactly $k$ distinct pages are requested; in particular, every phase except the last contains at least $k$ time steps. Page $p$ is \emph{fresh} for phase $i \geq 2$ if it is requested in phase $i$ but not phase $i-1$. The first request in a phase $i \geq 2$ is always for a fresh page.

\subsection{Superphases}

Now, we introduce a novel step in the analysis: A further partition of the phases into \emph{superphases}. This partition into superphases depends only on the number of phases $\ell$, the length $|P^i|$ of each phase, and the delay $Z$. The superphases $Q^1,\ldots,Q^m$ partition $[\ell]$ and are defined iteratively:

\begin{itemize}
    \item $Q^1$ begins with phase $1$.
    \item We accumulate phases in $Q^j$ until the total length of the phases in $Q^j$ is at least $Z$. Then, we add two additional phases. The next phase is then the first phase in $Q^{j+1}$. (We stop immediately when we run out of phases.)
\end{itemize}

\begin{tcolorbox}[colframe=blue!50!black, colback=blue!10!white, title=Example 4, rounded corners, left=1pt, right=1pt]
Consider  $Z=10$ and  $\ell=9$ phases with lengths
\[
|P^1|=6,|P^2|=2,|P^3|=3,|P^4|=15,|P^5|=1, |P^6|=11,|P^7|=3,|P^8|=9,|P^9|=1.
\]
Then, the corresponding superphases are:
\[Q^1=\{1,2,3,4,5\}, \; Q^2=\{6,7,8\}, \; Q^3=\{9\}.\]
\end{tcolorbox}

Note that the partition of phases into superphases depends only on the lengths of the phases and the delay $Z$, and not on the caching policy, the cache size, or the request sequence.

\subsection{Proving \Cref{thm:lru-bound}}

Now, we turn to proving \Cref{thm:lru-bound}. As in the previous section, we partition $[T]$ into $\ell$ phases $P^1,\ldots,P^\ell$, and $[\ell]$ further into $m$ superphases $Q^1,\ldots,Q^m$. For $j \in [m]$, let $\LOPT^j$ and $\LLRU^j$ denote the latencies
\[
\sum_{i \in Q^j} \sum_{t \in P^i} \vLatencyChange{t}
\]
incurred in superphase $j$ when running $\eOPT$ and $\eLRU$, respectively. Thus, \[
\sum_{j=1}^m \LOPT^j = \LOPT
\]
and similarly for $\LLRU$.

The proof of \Cref{thm:lru-bound} then relies on the following two complementary lemmas:

\begin{lemma}[$\eLRU$ superphase upper bound]\label{lem:plru-ub}
    For every $j \in [m]$, \( \LLRU^j \leq 4kZ^2. \)
\end{lemma}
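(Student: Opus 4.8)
\textbf{Proof plan for \Cref{lem:plru-ub}.}

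The plan is to show that within a single superphase $Q^j$, the LRU schedule incurs latency at most $4kZ^2$. The natural first move is to bound the \emph{number of misses} $\PLRU$ makes within $Q^j$: each individual request contributes at most $Z$ to the latency (a miss costs $Z$, a delayed hit costs $Z - z < Z$, and a hit costs $0$), so it suffices to show that the number of $\sMiss$ events occurring at time steps $t$ lying in phases of $Q^j$ is at most $4kZ$. Actually, since delayed hits are charged to misses (by \Cref{prop:dhit-to-miss}), it is cleaner to observe that each $\sMiss$ at time $t$ can ``spawn'' at most one delayed hit within the next $Z$ steps, and the miss plus its delayed hit together cost exactly $2Z - z \le 2Z$; so bounding the number of misses in (and slightly after) $Q^j$ by $2kZ$ would already suffice. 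Either way, the heart of the matter is: \emph{how many misses can $\PLRU$ make in a superphase?}

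For this I would use the standard phase/marking argument, adapted to the delayed-hits setting. In classical caching, a marking algorithm makes at most $k$ misses per phase, because within a phase only $k$ distinct pages are requested and once a page is brought in it is ``marked'' and never evicted until the phase ends. The subtlety here is the delay: when $\PLRU$ misses on a fresh page at time $t$, that page does not enter the cache until time $t+Z$, so during those $Z$ steps further requests for the \emph{same} page are delayed hits (harmless, at most one by \Cref{prop:dhit-to-miss}), but requests for \emph{other} fresh pages in the meantime also miss. The key structural fact to extract is that LRU never evicts a page that has been requested within the current phase (more precisely, within roughly the last $Z$ time steps plus current phase) — because such a page has a larger $\tlast{\rseq}{\cdot}{t}$ value than any page not requested recently. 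Since a superphase is built to have total length $\ge Z$ plus two extra phases, a phase $P^i$ in the ``interior'' of $Q^j$ has the property that everything relevant to evictions during $P^i$ happened within $Q^j$; and across all phases of $Q^j$, at most $k|Q^j|$ distinct (phase, fresh-page) pairs arise, while $|Q^j|$ itself is controlled: the greedy construction of superphases ensures $Q^j$ consists of a prefix of total length just under $Z$, one phase that pushes it to $\ge Z$, and two more phases — and since each non-final phase has length $\ge k \ge 1$, one gets $|Q^j| = O(Z)$ phases. Combining, the number of fresh-page events, hence misses, in $Q^j$ is $O(kZ)$, and multiplying by the $\le 2Z$ cost per miss-plus-delayed-hit gives $O(kZ^2)$; tracking constants carefully yields the claimed $4kZ^2$.

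The main obstacle I anticipate is \emph{pinning down precisely why LRU does not evict a ``useful'' page during a superphase}, given the $Z$-step lag between a miss and the corresponding cache update. In classical marking arguments the invariant ``a page requested in the current phase is never evicted'' is immediate; here one must argue about $\tlast{\rseq}{\cdot}{t}$ at the \emph{eviction time} $t$, which is $Z$ steps after the triggering miss at $t-Z$, and show that any page requested within the current phase (or within the last $Z$ steps) has a strictly more recent last-request time than the LRU victim — this is where the ``two extra phases'' padding in the superphase definition must be used, to guarantee that a full phase's worth of distinct requests separates the interior phases of $Q^j$ from the boundary effects. A secondary, more bookkeeping-flavored obstacle is handling the first and last one or two phases of a superphase separately (where pages carried over from the previous superphase, or evictions triggered near the boundary, don't fit the clean invariant), but these contribute only $O(k)$ extra misses each, absorbed into the constant. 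I would structure the proof as: (i) cost-per-request bound; (ii) $|Q^j| = O(Z)$; (iii) the LRU non-eviction invariant for interior phases; (iv) $O(k)$ misses per interior phase and $O(k)$ total for boundary phases; (v) assemble.
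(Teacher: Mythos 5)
There is a genuine gap in your accounting of delayed hits. You invoke \Cref{prop:dhit-to-miss} to claim that ``each $\sMiss$ at time $t$ can spawn at most one delayed hit within the next $Z$ steps,'' but that proposition says the opposite direction: each $\sDHit$ has a \emph{unique} parent $\sMiss$, not that each $\sMiss$ has at most one child $\sDHit$. A single miss for page $p$ at time $t_0$ can be followed by requests for $p$ at every one of the times $t_0+1,\dots,t_0+Z-1$, all of which are delayed hits with latencies $Z-1, Z-2,\dots,1$; so the latency attributable to one missed page within one phase is $\Theta(Z^2)$ in the worst case, not $\le 2Z$. Plugging the corrected per-miss cost $O(Z^2)$ into your tally of $O(k\cdot|Q^j|) = O(Z + k)$ misses per superphase (each non-final phase has length $\ge k$ and the prefix of $Q^j$ has total length $<Z$, so $|Q^j| \le Z/k + 3$) yields $O(Z^3 + kZ^2)$, which exceeds the target $4kZ^2$ whenever $Z \gg k$. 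The ``number of misses $\times$ worst-case cost per miss'' decomposition cannot give the right answer, because the two worst cases are incompatible: many phases forces short phases, and a short phase $P^i$ cannot contain $Z$ repeated requests for the same missed page.

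The repair is essentially the refinement you gesture at in your opening sentence but then abandon: count \emph{all} non-$\sHit$ requests (misses and delayed hits together), each costing at most $Z$, and bound their number per phase by $k\min\{Z,|P^i|\}$ rather than by $k$. This is the paper's \Cref{lem:plru-ub:phase}: for each of the $\le k$ pages requested in $P^i$, the marking property guarantees all requests after time $t_2$ (when the page is certain to be in cache) are hits, and the non-hit requests all lie in a window of length $\min\{Z,|P^i|\}$. Summing $kZ\min\{Z,|P^i|\}$ over $Q^j$ and using $\sum_{i=i_1}^{i_2-3}|P^i| < Z$ from the superphase construction gives $kZ(Z + 3Z) = 4kZ^2$. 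One smaller correction: the marking/non-eviction invariant is a purely per-phase statement (at eviction time the $k+1$ candidates cannot all have been requested in $P^i$, so the LRU victim is one that was not), and does not need the two-phase padding of superphases; that padding is used only for the summation above and for the lower bound on $\LOPT$.
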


\begin{lemma}[$\eOPT$ superphase lower bound]\label{lem:popt-lb}
    For every $j \in [m-1]$, \( \LOPT^j \geq Z. \)
\end{lemma}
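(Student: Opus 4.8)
The plan is to prove the two superphase lemmas separately, since together they immediately give the theorem: summing Lemma~\ref{lem:popt-lb} over the $m-1$ non-final superphases gives $\LOPT \geq (m-1)Z$, while summing Lemma~\ref{lem:plru-ub} over all $m$ superphases gives $\LLRU \leq 4kZ^2 m$. If $m \geq 2$ this yields $\LLRU / \LOPT \leq 4kZ^2 m / ((m-1)Z) = O(kZ)$ (using $m/(m-1) \leq 2$); the finitely many instances with $m = 1$ contribute only an additive constant that washes out in the $\limsup$ defining $\CR(\PLRU)$, or can be handled by noting $\LOPT \geq Z$ already once any page is requested that forces a miss. So the real content is in the two lemmas.

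For Lemma~\ref{lem:popt-lb} (the OPT lower bound), the key observation is structural: a superphase $Q^j$ with $j < m$ contains, by construction, a prefix of phases of total length $\geq Z$ followed by two more complete phases. Across the boundary between consecutive phases, a fresh page is requested, and over $k+1$ distinct pages appearing in a phase-plus-one-request window, any cache of size $k$ must miss on at least one of them at some point within or just after that window. I would argue that within superphase $Q^j$ there must be \emph{at least one} request whose latency contribution (a $\sMiss$ costing $Z$, or a $\sDHit$ costing $Z - z \geq 1$, or more carefully, a genuine $\sMiss$) falls in $Q^j$ and cannot be ``charged away'' to an earlier superphase — the two extra phases are precisely the slack needed so that the $Z$-step retrieval window of a miss occurring near the start of $Q^j$ stays inside $Q^j$, and so that a miss forced by the fresh-page structure of $Q^j$ is not one that OPT could have pre-fetched using an eviction triggered before $Q^j$ began. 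Concretely: consider the first time in $Q^j$ that $k+1$ distinct pages have been seen since the start of $Q^j$; at that point OPT's cache cannot contain all of them, so OPT incurs a $\sMiss$ or $\sDHit$ at some time $t$ in $Q^j$, and I must check the latency of that event, counted at request time, lands in $Q^j$ — which is where the ``length $\geq Z$ plus two phases'' budget is used, to guarantee enough room. This gives $\LOPT^j \geq Z$ if the event is a full miss; if it is only a delayed hit of cost $Z-z$, I would instead charge to the associated $\sMiss$ (guaranteed by Proposition~\ref{prop:dhit-to-miss}) and argue that miss itself has cost $Z$ and lies in $Q^j$ (again using the two-phase buffer to keep it from sliding into $Q^{j-1}$).

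For Lemma~\ref{lem:plru-ub} (the LRU upper bound), I would bound the number of ``expensive'' requests LRU makes within a superphase and multiply by the maximum per-request cost $Z$. The standard marking argument shows LRU misses at most $k$ times per phase on fresh pages (each fresh page is missed at most once per phase under LRU, since once brought in it is the most-recently-used and won't be evicted before being requested again within the phase). A superphase $Q^j$ has length-budget roughly $Z$ plus two phases; crucially, the number of \emph{phases} in a superphase is what I need to bound, not the number of time steps. Since we stop adding phases once the accumulated length reaches $Z$ and then add two more, a superphase contains at most (number of phases needed to reach total length $Z$) $+ 2$ phases; in the worst case every phase has length exactly $k$ (the minimum for a non-final phase), so there are at most $Z/k + 3$ phases, hence at most $(Z/k + 3) \cdot k = Z + 3k$ misses under LRU — wait, that's not quite $4kZ^2$, so the bound must be slacker than I'm computing; more likely the intended argument also pays $Z$ per miss \emph{and} separately accounts for delayed hits each also costing up to $Z$, and phases can be as short as... actually a non-final phase has length $\geq k \geq 1$, and there can be up to $Z + 2$ phases if $k=1$, each contributing up to $k$ misses at cost $Z$ plus up to $Z$ delayed hits at cost $\leq Z$, giving $O(Z) \cdot O(kZ) = O(kZ^2)$ per superphase, matching $4kZ^2$ up to the constant. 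So the plan for this lemma: (i) bound the number of phases in a superphase by $O(Z)$ (using phase length $\geq 1$ and the stopping rule — the prefix reaching length $Z$ has at most $Z$ phases, plus $2$); (ii) within each phase, bound LRU's total latency by $O(kZ)$ — at most $k$ fresh-page misses each costing $\leq Z$, plus delayed hits, each of which maps (via Proposition~\ref{prop:dhit-to-miss}) to a distinct nearby miss, so their number and total cost is also $O(kZ)$ within a phase or its $Z$-neighborhood; (iii) multiply.

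The main obstacle, I expect, is Lemma~\ref{lem:popt-lb}: carefully arguing that OPT is \emph{forced} to pay $\geq Z$ with the cost landing inside $Q^j$. Delayed hits complicate the classical phase argument because OPT's eviction decisions are entangled across the $Z$-window: a miss ``in'' superphase $Q^j$ might have been prefetched by an eviction scheduled during $Q^{j-1}$, or its latency cost might straddle the superphase boundary. The two extra buffer phases appended to each superphase are exactly the device meant to neutralize this — I would need to show that the fresh-page-forced miss in $Q^j$ occurs late enough, and its $Z$-step retrieval window early enough, that neither a pre-$Q^j$ prefetch nor a cost-spillover can rescue OPT. Getting the indices right here, and confirming that ``length $\geq Z$ then two more phases'' is the precise slack required (and why two, not one), is where the delicate work lies; the LRU side is a fairly routine adaptation of the classical marking analysis with an extra factor of $Z$ for per-miss latency and another $O(Z)$ for the number of phases per superphase.
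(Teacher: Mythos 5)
Your overall strategy for Lemma~\ref{lem:popt-lb} is the right one --- exhibit a $\sMiss$ whose request time lies in $Q^j$, so that its latency $Z$ is charged to $\LOPT^j$ --- but the central step is both unproven and, as stated, false. You claim that at the first time $k+1$ distinct pages have been seen since the start of $Q^j$, OPT must incur a $\sMiss$ or $\sDHit$ because its cache holds only $k$ pages. This does not follow: the cache is not static over that window, since an $\sEvict$ at time $t$ (triggered by a $\sMiss$ at time $t-Z$, possibly lying in $Q^{j-1}$) can rotate pages through the cache so that every one of the $k+1$ requests is a $\sHit$. The paper's proof therefore works with a three-way alternative --- a $\sMiss$, a $\sDHit$, \emph{or} an $\sEvict$ --- and in the $\sHit$ cases runs a two-level case analysis (first on a fresh page $p$ of the last phase, then on a displaced page $q$ of the second-to-last phase) to force one of these three events; whichever is found is then traced back to a $\sMiss$ at most $Z$ steps earlier.

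Your choice of location is also wrong. The first moment $k+1$ distinct pages have been seen can occur as early as the first request of the \emph{second} phase of $Q^j$ (offset roughly $k+1$ from the start of the superphase), and a $\sMiss$ traced back up to $Z$ steps from there lands in $Q^{j-1}$ when $Z \gg k$; the length-$\geq Z$ prefix gives no protection at that point. The paper instead hunts for the event in the \emph{last two} phases $P^{i_2-1}, P^{i_2}$ of $Q^j$, which by construction sit after a prefix of phases of total length at least $Z$, so the traced-back $\sMiss$ is guaranteed to remain inside $Q^j$. This also answers your own question of why two buffer phases rather than one: $P^{i_2}$ supplies the fresh page $p$, while $P^{i_2-1}$ supplies the $k$ requested pages among which the displaced page $q$ is found, and \emph{both} must lie beyond the length-$\geq Z$ prefix for the trace-back to work. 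You correctly identified all of these difficulties, but the ``delicate work'' you defer is the entire content of the lemma, so the proposal as written has a genuine gap.
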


We now prove \Cref{thm:lru-bound}:

\begin{proof}[Proof of \Cref{thm:lru-bound}]
Summing the inequality in \Cref{lem:plru-ub} over all $m$ superphases gives $\LLRU \leq 4mkZ^2$, while summing the inequality in \Cref{lem:popt-lb} over the first $m-1$ superphases gives $\LOPT \geq (m-1) Z$. If $m \geq 2$, then $\frac{m}{(m-1)} \leq 2$ and thus \[ \frac{\LLRU}{\LOPT} \leq \frac{4mkZ^2}{(m-1)Z} \leq 8kZ, \] as desired. Otherwise, $m=1$, in which case $\LLRU \leq 4kZ^2,$ and it is easy to see that $\LOPT \geq Z$, since the first time any page in $[n]\setminus[k]$ is requested is always a $\sMiss$ in every policy. (If only pages in $[k]$ are ever requested, both $\eOPT$ and $\eLRU$ will incur zero latency.)
\end{proof}

To complete the proof of~\Cref{thm:lru-bound}, it remains to prove~\Cref{lem:plru-ub,lem:popt-lb}.

\subsection{LRU policy upper bound}

\Cref{lem:plru-ub}, which bounds the latency incurred in each superphase, follows from the following lemma which bounds the latency incurred in each \emph{phase}:

\begin{lemma}\label{lem:plru-ub:phase}
    For every $i \in [\ell]$, in $\eLRU$, \[
    \sum_{t \in P^i} \vLatencyChange{t} \leq kZ \min\{Z,|P^i|\}.
    \]
\end{lemma}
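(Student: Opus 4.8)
The plan is to bound the latency incurred by $\eLRU$ within a single phase $P^i$ by carefully counting how many times LRU can incur a $\sMiss$ or $\sDHit$ there, and by observing that each such event contributes latency at most $Z$. First I would recall the two structural facts about phases: each phase (except the last) requests exactly $k$ distinct pages, and a phase contains at least $k$ timesteps. The key claim I want to establish is that within a single phase $P^i$, LRU incurs at most $k$ ``fetches'' --- i.e., at most $k$ of the requests in $P^i$ trigger a fresh retrieval from the backing store (a $\sMiss$), and hence the total number of $\sMiss$es in $P^i$ is at most $k$.

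The heart of the argument is the standard LRU/marking invariant adapted to the delayed-hits setting: during a phase, once a (distinct) page has been requested, LRU will not evict it for the remainder of the phase, because within a phase at most $k$ distinct pages are touched and LRU always evicts the page with the smallest $\tlast{\rseq}{\cdot}{t}$, which will be a page \emph{not} requested so far in the current phase (if such a page remains in cache) --- here I would need to check the delayed-hits wrinkle that the eviction at time $t$ concerns $\vCache{t-1}\cup\{\vReq{t-Z}\}$, and that $\vReq{t-Z}$ could itself belong to the current phase; but since $\vReq{t-Z}$ was just fetched, it was a $\sMiss$ and hence its previous request time is the most recent among ``fresh-this-phase'' pages, so LRU still prefers to evict a stale page. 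Consequently, each distinct page requested in $P^i$ causes at most one $\sMiss$ within $P^i$, giving at most $k$ misses per phase. Each $\sMiss$ contributes exactly $Z$ latency. For $\sDHit$s, by \Cref{prop:dhit-to-miss} each delayed hit is charged to a distinct earlier $\sMiss$ within the last $Z$ timesteps, and contributes latency $Z - i \leq Z$; the number of delayed hits ``generated'' by a single $\sMiss$ at time $t'$ is at most the number of timesteps in $(t', t'+Z)$, i.e.\ at most $Z-1$. So the misses in $P^i$ plus all delayed hits that they generate contribute at most $k \cdot Z + k\cdot(Z-1)\cdot Z$ latency --- but I should be more careful: delayed hits in $P^i$ may be generated by misses from the \emph{previous} phase, and misses in $P^i$ may generate delayed hits in the \emph{next} phase. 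The cleanest accounting is: $\sum_{t\in P^i}\vLatencyChange{t} \le Z \cdot \#\{\text{misses and delayed hits in } P^i\}$, and the number of non-hit events in $P^i$ is at most (number of misses in $P^i$ or in the preceding $Z$ timesteps) since each delayed hit in $P^i$ maps to such a miss injectively. A miss in the preceding $Z$ timesteps lies in $P^i$ or in a previous phase; the number of misses in $P^i$ is $\le k$, and I'd want the number of ``recent'' misses before $P^i$ also controlled --- but since each miss generates at most $Z-1$ delayed hits total, and there are at most $k$ misses in $P^i$ plus at most $Z$ timesteps' worth of prior misses, this is where the $\min\{Z,|P^i|\}$ factor enters: a phase shorter than $Z$ simply cannot contain more than $|P^i|$ non-hit events, while a longer phase is bounded via the $kZ$ miss-plus-generated-delayed-hit count. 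Assembling: $\sum_{t\in P^i}\vLatencyChange{t} \le Z\cdot\min\{kZ, |P^i|\cdot(\text{something})\}$, which after bookkeeping yields the stated $kZ\min\{Z,|P^i|\}$.

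The main obstacle I anticipate is the delayed-hits subtlety in the eviction rule: because LRU's eviction at time $t$ ranges over $\vCache{t-1}\cup\{\vReq{t-Z}\}$ rather than just $\vCache{t-1}$, and because an eviction only happens when $\vStatus{t-Z}=\sMiss$, the timing of when a freshly-fetched page enters the cache is offset by $Z$ from its request. I need to verify that the ``a page requested in this phase is not evicted again during this phase'' invariant survives this offset --- in particular, a page requested early in a long phase could be fetched and then, $Z$ steps later (still possibly within the phase), be a candidate for eviction; I must argue LRU won't pick it because at that point there is still some page in cache not yet requested in the current phase (since fewer than $k$ distinct pages have been requested, as the phase is not yet over) whose last-request time is even older. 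Making this precise --- especially handling the boundary where the phase has exactly $k$ distinct requests and the $k$-th fresh page arrives --- and then converting the miss-count bound into the clean latency bound $kZ\min\{Z,|P^i|\}$ via the delayed-hit-to-miss injection, is the technical crux.
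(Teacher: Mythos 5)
Your marking invariant is essentially the paper's: within $P^i$ at most $k$ distinct pages are requested, so the eviction candidate set $\vCache{t'-1}\cup\{\vReq{t'-Z}\}$ (of size $k+1$, by \Cref{prop:cache-size}) always contains a page not yet requested in $P^i$, whose last-request time predates the phase and is therefore strictly older than that of any page already requested in $P^i$; hence LRU never evicts such a page, and your worry about the $Z$-offset of the eviction is handled exactly this way. That part is sound.

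The gap is in the final assembly, at precisely the point you flag and then defer to ``bookkeeping.'' Your global accounting charges each $\sDHit$ in $P^i$ to its generating $\sMiss$ and bounds the generated delayed hits by $Z-1$ per miss; but once you admit that some of those generating misses lie in the previous phase (in the $Z$ timesteps before $P^i$), your count of non-hit events becomes (misses in $P^i$) plus (delayed hits generated by up to $2k$ misses), which yields roughly $2kZ$ events and hence $2kZ^2$ latency --- a constant factor worse than the stated $kZ\min\{Z,|P^i|\}$, and your displayed inequality with the placeholder ``(something)'' never resolves this. The clean repair, and the route the paper takes, is to do the accounting \emph{per page} rather than per miss: for each of the $\leq k$ pages $p$ requested in $P^i$, let $t_1$ be its first request in $P^i$ and let $t_2 \leq t_1 + Z$ be the time by which $p$ is guaranteed to enter the cache (whether $t_1$ is a $\sHit$, a $\sMiss$, or a $\sDHit$ inherited from an earlier miss). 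By marking, $p$ stays cached for the rest of the phase, so every request for $p$ at time $t' \geq t_2$ in $P^i$ is a $\sHit$; thus all latency-incurring requests for $p$ lie in $[t_1,\min\{t_2,u\})$, a window of length at most $\min\{Z,|P^i|\}$, and each costs at most $Z$. This gives $Z\min\{Z,|P^i|\}$ per page in one stroke, with no cross-phase double counting, and summing over $k$ pages finishes the proof. Your dichotomy observation (a page whose first in-phase request is a carried-over $\sDHit$ cannot also incur an in-phase $\sMiss$) is the germ of this, but you need to organize the count around pages and their windows rather than around misses to land on the stated constant.
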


First, we prove \Cref{lem:plru-ub} using \Cref{lem:plru-ub:phase}:

\begin{proof}[Proof of \Cref{lem:plru-ub}]
To prove \Cref{lem:plru-ub}, let $i_1$ and $i_2$ be first and last phases in superphase $Q^j$. Then
\begin{align*}
    \LLRU^j &= \sum_{i = i_1}^{i_2} \sum_{t\in P^i} \vLatencyChange{t} \tag{def. $\LLRU^j$} \\
    &\leq \sum_{i=i_1}^{i_2} kZ \min\{Z,|P^i|\} \tag{\cref{lem:plru-ub:phase}} \\
    &\leq kZ \left( \sum_{i=i_1}^{i_2-3} \min\{Z,|P^i|\} + 3Z \right).
\end{align*}

Finally, we observe that $\sum_{i=i_1}^{i_2-3} \min\{Z,|P^i|\} \leq Z$, since by the definition of superphase we have $\sum_{i=i_1}^{i_2-2} |P^i| \geq Z$ but $\sum_{i=i_1}^{i_2-3} |P^i| < Z$. Thus, $\LLRU^j \leq 4kZ^2$, as desired.
\end{proof}

After the proof of \Cref{lem:plru-ub} from \Cref{lem:plru-ub:phase}, we turn to the proof of \Cref{lem:plru-ub:phase}. This proof does not use the definition of $\PLRU$ directly: Instead, it uses a property of the schedule $\PLRU$ produces, as follows. An eviction schedule $\esched = (\vEvict{1},\ldots,\vEvict{T})$ is \emph{marking} with respect to the request sequence $\rseq = (\vReq{1},\ldots,\vReq{T})$ if for every $i \in [L]$ and every $t<t' \in P^i$, $\vReq{t} \neq \vEvict{t'}$. That is, a page is never evicted during $P^i$ after it has been requested in $P^i$.

\begin{lemma}
    $\eLRU$ is marking.
\end{lemma}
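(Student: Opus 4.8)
The plan is to adapt the classical argument that $\PLRU$ is marking in the no-delay setting to the delayed-hits FSM, proceeding by contradiction. The structural facts I would rely on are: each phase $P^i$ contains requests for at most $k$ distinct pages; phases are contiguous intervals of $[T]$; the cache has size exactly $k$ at every time step (the corollary of \Cref{prop:cache-size}, using that $\eLRU$ is a valid schedule, which is immediate from the definition of $\PLRU$); and whenever an eviction occurs at time $t'$, by \Cref{prop:cache-size} the page $\vReq{t'-Z}$ being fetched is not already in $\vCache{t'-1}$, so $\PLRU$ chooses the evicted page among the $k+1$ distinct pages of $\vCache{t'-1}\cup\{\vReq{t'-Z}\}$, picking one minimizing $\tlast{\rseq}{\cdot}{t'}$.

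So, suppose $\eLRU$ is not marking: there are times $t<t'$ in a common phase $P^i$ with $\vReq{t}=\vEvict{t'}=:p$. First, since $\vEvict{t'}=p\neq\bot$, the definition of $\PLRU$ forces $\vStatus{t'-Z}=\sMiss$ and identifies the candidate set above, of size $k+1$, with $p$ a minimizer of $\tlast{\rseq}{\cdot}{t'}$ over it. Second, distinct pages never share a last-request time (one request per time step), so the minimizer is unique; since $p$ was requested at time $t<t'$ we have $\tlast{\rseq}{p}{t'}\geq t$ finite, hence every one of the other $k$ candidates $s$ satisfies $\tlast{\rseq}{s}{t'}>\tlast{\rseq}{p}{t'}$ — i.e., there are at least $k$ pages strictly more recently used than $p$ at time $t'$. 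Third, any such page $q$ was last requested at a time $s$ with $t\leq\tlast{\rseq}{p}{t'}<s<t'$, and since $t,t'-1\in P^i$ and phases are intervals, $s\in P^i$; so $q$ is requested in $P^i$. Together with $p$ (requested at $t\in P^i$) this gives $k+1$ distinct pages requested in $P^i$, contradicting the at-most-$k$ bound. Hence $\eLRU$ is marking.

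The step I expect to require the most care is the uniqueness-of-the-minimizer point and the treatment of the "odd" candidate $\vReq{t'-Z}$: one must note explicitly that no two distinct pages have equal finite $\tlast{\rseq}{\cdot}{t'}$, so being "a page minimizing $\tlast{\rseq}{\cdot}{t'}$" forces every other candidate to be strictly more recent, and that $\vReq{t'-Z}$ is no exception since it was requested at time $t'-Z<t'$ and thus behaves like any other page here. The remaining ingredients (validity of $\eLRU$, $|\vCache{\cdot}|=k$, and the phase/interval combinatorics) are routine bookkeeping from the already-established properties of the model.
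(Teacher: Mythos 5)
Your proof is correct and is essentially the paper's argument run in the contrapositive direction: the paper uses the pigeonhole principle to exhibit a candidate in $\vCache{t'-1}\cup\{\vReq{t'-Z}\}$ that is not requested in $P^i$ and is therefore strictly less recently used than $p$, whereas you assume $p$ is the minimizer and conclude that all $k+1$ candidates were requested in $P^i$, contradicting the at-most-$k$ bound. The ingredients are identical (the $(k{+}1)$-element candidate set via \Cref{prop:cache-size}, the bound of $k$ distinct pages per phase, and the interval structure of phases), so this is essentially the same proof.
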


\begin{proof}
    Let $t < t' \in P^i$ and let $p = \vReq{t}$. Without loss of generality, we assume $t$ is the \emph{first} time $p$ is requested during phase $i$.

    Now suppose $\vEvict{t'} = p$. Thus, by definition of $\PLRU$, $\vStatus{t'-Z} = \sMiss$ and $\vEvict{t'}$ is some page $s$ minimizing $\tlast{\rseq}{s}{t'}$ over $s \in \vCache{t'-1} \cup \{\vReq{t'-Z}\}$. Now by \Cref{prop:cache-size}, $|\vCache{t'-1} \cup \{\vReq{t'-Z}\}| = k+1$. Since at most $k$ distinct pages are requested in $P^i$, there is some $s \in \vCache{t'-1} \cup \{\vReq{t'-Z}\}$ not requested in $P^i$. Thus, $\tlast{\rseq}{s}{t'} < \tlast{\rseq}{p}{t'}$ and so $\PLRU$ will not evict $p$.
\end{proof}

\begin{lemma}
    Let $p$ be any page requested during $P^i$, and let $t_1$ denote the time of its first request during $P^i$. Let $\esched$ be any marking schedule. Let
    \[
    t_2 = \begin{cases} t_1 & \vStatus{t_1} = \sHit \\ t_1 + Z & \vStatus{t_1} = \sMiss \\ t_1 + (Z-i(t_1)) & \vStatus{t_1} = \sDHit \end{cases}
    \]
    where in the case $\vStatus{t_1} = \sDHit$, $i(t_1)$ denotes the value $i \in [Z-1]$ such that $\vStatus{t-i} = \sMiss$ and $\vReq{t_1-i} = p$ (unique by \Cref{prop:dhit-to-miss}). Then for all $t' \geq t_2 \in P^i$, $\vReq{t_1} \in \vCache{t'}$.
\end{lemma}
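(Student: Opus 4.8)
The lemma claims that once a page $p$ has been requested in phase $i$ and its initial retrieval has completed (at time $t_2$), it stays in the cache for the rest of phase $i$, under any marking schedule. The plan is to prove this by induction on $t' \geq t_2$ within $P^i$.

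**Base case.** At $t' = t_2$, I would argue $p \in \vCache{t_2}$ by a case analysis on $\vStatus{t_1}$. If $\vStatus{t_1} = \sHit$, then $t_2 = t_1$ and $p \in \vCache{t_1}$ by definition of a hit. If $\vStatus{t_1} = \sMiss$, then $t_2 = t_1 + Z$, and the FSM performs the cache update at line \ref{line:fsm:update-cache} at time $t_1 + Z$: since $\vStatus{(t_1+Z)-Z} = \vStatus{t_1} = \sMiss$, we get $\vCache{t_1+Z} = (\vCache{t_1+Z-1} \cup \{\vReq{t_1}\}) \setminus \{\vEvict{t_1+Z}\}$, and $\vReq{t_1} = p$ enters the cache; moreover $\vEvict{t_1+Z} \neq p$ because the schedule is marking (if $t_1 + Z \in P^i$ — I should double-check that the marking property, stated for $t < t' \in P^i$, covers evicting at $t_2$, and handle the subtlety that $t_2$ might land in a later phase, in which case the claim "for all $t' \geq t_2 \in P^i$" is vacuous or needs care). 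If $\vStatus{t_1} = \sDHit$, then the associated $\sMiss$ occurred at $t_1 - i(t_1)$, the retrieval completes at $(t_1 - i(t_1)) + Z = t_2$, and the same FSM update argument applies with the miss index shifted.

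**Inductive step.** Suppose $p \in \vCache{t'}$ for some $t' \geq t_2$ with $t', t'+1 \in P^i$. I want $p \in \vCache{t'+1}$. By the FSM, $\vCache{t'+1}$ is either $\vCache{t'}$ (if no eviction happens) — in which case we're done — or $(\vCache{t'} \cup \{\vReq{t'+1-Z}\}) \setminus \{\vEvict{t'+1}\}$. In the latter case, $p$ survives unless $\vEvict{t'+1} = p$. But $p$ was requested at $t_1 \leq t' < t'+1$, all in $P^i$ (here I use $t_1 \in P^i$ and $t' \geq t_2 \geq t_1$), so the marking property forbids $\vEvict{t'+1} = p = \vReq{t_1}$. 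Hence $p \in \vCache{t'+1}$.

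**Main obstacle.** The delicate point is the interplay between $t_2$ and phase boundaries: $t_2$ can be as large as $t_1 + Z$, which may exceed the end of $P^i$ (phases can be shorter than $Z$). In that case the statement "for all $t' \geq t_2 \in P^i$" ranges over an empty set and holds vacuously, so the proof is only nontrivial when $t_2$ lies within $P^i$; I would state this explicitly. The second subtlety is confirming that the marking hypothesis — phrased as "$\vReq{t} \neq \vEvict{t'}$ for $t < t' \in P^i$" — indeed applies to rule out the eviction at $t_2$ itself (i.e., that $t_1 < t_2$ strictly whenever $t_2 > t_1$, which holds since $Z \geq 1$ and $i(t_1) \leq Z-1$ so $Z - i(t_1) \geq 1$), and to rule out evictions at every $t'+1$ in the inductive step. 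Once these bookkeeping points are pinned down, the argument is a routine induction driven entirely by the FSM's cache-update rule and the marking property; it does not invoke the specifics of LRU.
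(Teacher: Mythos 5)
Your proof is correct and follows essentially the same route as the paper: establish $p \in \vCache{t_2}$ by the same three-way case analysis (using the marking property to rule out eviction of $p$ at the arrival time $t_2$, with the same observation that in the delayed-hit case one appeals to the request at $t_1 \in P^i$ rather than the earlier miss), and then use marking again to show $p$ is never evicted for the rest of $P^i$. The bookkeeping points you flag (vacuity when $t_2 \notin P^i$, and $t_1 < t_2$ so the marking condition applies) are exactly the ones the paper also addresses.
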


\begin{proof}
    If $t_2 \not\in P^i$, we have nothing to prove. If $t_2 \in P^i$, we proceed inductively. First, we claim that $\vReq{t_1} \in \vCache{t_2}$:
    \begin{itemize}
        \item If $\vStatus{t_1} = \sHit$, then $t_1 = t_2$ and the claim holds trivially.
        \item If $\vStatus{t_1} = \sMiss$, then $t_2 = t_1+Z$. Since $t_2 - Z = t_1$, $\vStatus{t_2 - Z} = \sMiss$, so there is an eviction at time $t_2$. Moreover, $\vReq{t_2 - Z} = p$, and by the definition of marking, $\vEvict{t_2} \neq \vReq{t_1} = p$. Thus, $p \in \vCache{t_2}$ (i.e., $p$ enters the cache and it is not immediately evicted).
        \item If $\vStatus{t_1} = \sDHit$, then $t_2 = t_1 + (Z-i(t_1))$, and $\vStatus{t_1 - i(t_1)} = \sMiss$ and $\vReq{t_1 - i(t_1)} = p$. Since $t_2 - Z = t_1 - i(t_1)$, $\vStatus{t_2 - Z} = \sMiss$, so there is an eviction at time $t_1 + (Z - i(t_1))$. Moreover, $\vReq{t_2 - Z} = p$, and by the definition of marking, $\vEvict{t_2} \neq \vReq{t_1} = p$.\footnote{Note that $t_1 - i(t_1)$ might not be in $P^i$, so we cannot apply marking to $t_1 - i(t_1)$. But, we already know a request for $p$ in $P^i$ earlier than $t_2$, namely, $t_1$.} Thus, $p \in \vCache{t_2}$ (i.e., $p$ enters the cache and it is not immediately evicted).
    \end{itemize}
    Now the definition of marking further implies that $p$ will never be evicted over the remainder of $P^i$, as desired.
\end{proof}
 
Finally, we prove \Cref{lem:plru-ub:phase} directly:

\begin{proof}[Proof of \Cref{lem:plru-ub:phase}]
    By definition, during $P^i$, there are requests for (at most) $k$ distinct pages. So, it suffices to show that for each $p \in [n]$, the latency incurred by requests for $p$ during phase $P^i$ is at most $Z \min\{Z, |P^i|\}$. We now analyze all such requests.
    
    Let $u$ denote the maximum time step in $P^i$.
    Fix $p$, let $t_1$ denote the first time $p$ is requested in $P^i$, and let $t_2$ be defined as in the previous lemma. Then there is no request for $p$ during $P^i$ before $t_1$. Further, since $\eLRU$ is marking, every request for $p$ at time $t' \geq t_2$ during $P^i$ is a $\sHit$, since $p \in \vCache{t'}$. So, to bound the latency, it suffices to consider requests for $p$ at time $t_1 \leq t < \min\{t_2,u\}$. Such a request incurs latency $Z-(t-t_1) \leq Z$. Finally, we observe that \[ \min\{t_2,u\}-t_1 = \min\{t_2-t_1,u-t_1\} \leq \min\{Z, |P^i|\}, \] as desired.
\end{proof}

\subsection{Optimal policy lower bound}

Now, we prove \Cref{lem:popt-lb}. In this subsection, we always assume we are executing the schedule $\eOPT$. Since a $\sMiss$ incurs a latency penalty of $Z$ (\Cref{line:fsm:miss:penalty}), \Cref{lem:popt-lb} follows immediately from the following lemma:

\begin{lemma}\label{lem:popt-lb:phase}
    For every $j \in [m-1]$, there is a $\sMiss$ in some phase $i \in Q^j$.
\end{lemma}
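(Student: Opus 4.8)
The plan is to prove \Cref{lem:popt-lb:phase} by contradiction. Suppose that for some $j \in [m-1]$ the superphase $Q^j$ contains no $\sMiss$ when running $\eOPT$; I will contradict the phase-counting structure. Write $Q^j = \{i_1,\dots,i_2\}$. By the construction of superphases, the ``core'' phases $P^{i_1},\dots,P^{i_2-2}$ have total length at least $Z$, and since $j \le m-1$ the superphase is complete, so the two extra phases $P^{i_2-1}$ and $P^{i_2}$ genuinely exist (indeed $Q^j$ has at least three phases, so $i_2-1 \ge 2$). Let $\tau$ be the first time step of $P^{i_1}$. Because the core occupies the first $\ge Z$ time steps of $Q^j$, every time step $t$ in $P^{i_2-1} \cup P^{i_2}$ satisfies $t \ge \tau + Z$, while $t$ of course still belongs to $Q^j$ (whose time steps form a consecutive block starting at $\tau$).

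The first step is to show the cache is \emph{frozen} throughout $P^{i_2-1}\cup P^{i_2}$, i.e., equals a single set $S$ with $|S| = k$. An eviction happens at time $t$ only if $\vStatus{t-Z} = \sMiss$; for $t$ in these two phases we have $\tau \le t - Z$ and $t-Z < t$, so $t - Z$ lies in $Q^j$, where by assumption there is no $\sMiss$. Hence $\vCache{t} = \vCache{t-1}$ for each such $t$, and $|\vCache{t}| = k$ by the corollary to \Cref{prop:cache-size}. The second step is to show every request at such a time $t$ is a $\sHit$: it is not a $\sMiss$ by assumption, and it cannot be a $\sDHit$, because a $\sDHit$ at $t$ requires a $\sMiss$ at some time $t - i$ with $i \in [Z-1]$, and $\tau \le t - Z + 1 \le t - i < t$ again places $t - i$ inside $Q^j$.

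Combining the two steps: every page requested in $P^{i_2-1} \cup P^{i_2}$ is a $\sHit$ against the fixed cache $S$, hence belongs to $S$. But $P^{i_2-1}$ is not the last phase (since $i_2 < \ell$ as $j \le m-1$), so exactly $k$ distinct pages are requested in it, and the first request of $P^{i_2}$ is fresh, i.e., a page not requested in $P^{i_2-1}$; so at least $k+1$ distinct pages are requested across $P^{i_2-1} \cup P^{i_2}$, contradicting $|S| = k$. Since a $\sMiss$ costs $Z$, this yields $\LOPT^j \ge Z$ for every $j \in [m-1]$, which is \Cref{lem:popt-lb}. I expect the only delicate point to be the bookkeeping linking the ``two extra phases'' of the superphase definition to the inequality $t \ge \tau + Z$ — that is, confirming that these last two phases sit far enough beyond the start of $Q^j$ that every $\sMiss$ which could ``cause'' an eviction or a delayed hit inside them must itself lie within $Q^j$; the rest is the classical caching phase argument.
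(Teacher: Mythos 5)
Your proof is correct. It rests on the same two structural facts as the paper's argument --- (a) the ``core'' phases of $Q^j$ have total length at least $Z$, so any $\sMiss$ that could trigger an eviction or a $\sDHit$ at a time step of $P^{i_2-1} \cup P^{i_2}$ must itself lie inside $Q^j$, and (b) $P^{i_2-1}$ requests exactly $k$ distinct pages while $P^{i_2}$ opens with a fresh page --- but you package them contrapositively. The paper argues forward: it fixes a fresh page $p$ of $P^{i_2}$ and runs a nested case analysis (status of $p$'s first request; whether $p$ was cached at the end of $P^{i_2-1}$; the status of an auxiliary page $q$ requested in $P^{i_2-1}$ but absent from the cache at its end) to establish that a $\sMiss$, $\sDHit$, or eviction occurs in the last two phases. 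You instead assume no $\sMiss$ in $Q^j$, conclude that the cache is frozen at some $k$-element set $S$ and that every request in the last two phases is a $\sHit$, and then get a contradiction by pigeonhole, since those two phases request at least $k{+}1$ distinct pages which would all have to lie in $S$. The gain of your route is that it collapses the paper's case tree into a single counting step; the paper's forward version has the mild advantage of exhibiting explicitly \emph{which} event (a miss, delayed hit, or eviction tied to a concrete page) witnesses the $\sMiss$ inside $Q^j$. Your bookkeeping of the offsets ($t \geq \tau + Z$, hence $t - Z \geq \tau$ and $t - i \geq \tau + 1$ for $i \in [Z-1]$) is exactly the delicate point you flagged, and it checks out.
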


\begin{proof}[Proof of \Cref{lem:popt-lb:phase}]
Let $i_1$ and $i_2$ denote, respectively, the minimum and maximum elements of $Q^j$. Note that $i_2 \geq i_1+2$ and $\sum_{i=i_1}^{i_2-2} |P^i| \geq Z$. We show that

\begin{quote}
    \claimmark: There was either a $\sMiss$, $\sDHit$, or an $\sEvict$ during phases $i_2$ or $i_2-1$.
\end{quote}

If there is a $\sMiss$ in phase $i_2$ or $i_2-1$, we are immediately done. Otherwise, a $\sDHit$ or $\sEvict$ imply a $\sMiss$ at most $Z$ time steps before a time step in $P^{i_2} \cup P^{i_2-1}$; since $\sum_{i=i_1}^{i_2-2} |P^i| \geq Z$, this $\sMiss$ is guaranteed to fall \emph{within} $Q^j$.

Let $p$ be any page which is fresh in phase $i_2$. Let $t$ be the first time $p$ is requested in phase $i_2$ and let $u$ denote the final time step in phase $i_2-1$. We consider cases on the status of the request at time $t$:

\begin{enumerate}
    \item If there was a $\sMiss$ at time $t$, then \claimmark.
    \item If there was a $\sDHit$ at time $t$, then \claimmark.
    \item If there was a $\sHit$ at time $t$, then $p \in \vCache{t}$. This is ``unexpected'' because $p$ was \emph{not} requested in phase $i_2-1$, yet it was in the cache during phase $i_2$. We split into subcases based on whether $p$ was in the cache by the end of phase $i_2-1$ (i.e., $p \in \vCache{u}$):
    \begin{enumerate}
        \item If $p \not\in \vCache{u}$, then since $p \in \vCache{t}$, there was an $\sEvict$ at some time $u < t' \leq t$, so \claimmark.
        \item If $p \in \vCache{u}$, we recall that there were $k$ pages requested during phase $i_2-1$. Since $p$ is fresh for phase $i_2$, it was \emph{not} requested during phase $i_2-1$. Thus, there exists a page $q$ requested in phase $i_2-1$ which was \emph{not} on the cache at the end of phase $i_2-1$ (i.e., $q \not\in \vCache{u}$). Let $s$ be the first time $q$ was requested during phase $i_2-1$. Consider cases based on the status of the request at time $s$:
        \begin{enumerate}
            \item If there was a $\sMiss$ at time $s$, then \claimmark.
            \item If there was a $\sDHit$ at time $s$, then \claimmark.
            \item If there was a $\sHit$ at time $s$, then $q \in \vCache{s}$. Since $q \not\in \vCache{u}$, there is an $\sEvict$ during phase $i_2-1$, so \claimmark.
        \end{enumerate}
    \end{enumerate}
\end{enumerate}

\end{proof}

\section{Evaluation}
\label{sec:evaluation}

A natural question after our analysis is how the behavior of LRU in practice compares to our guaranteed worst-case competitive ratio. In a nutshell, we draw the following conclusion from our experiments:
\begin{itemize}
    \item The competitive ratio does worsen as the cache size $k$ grows larger, albeit at a much smaller rate.
    \item The competitive ratio does not seem to worsen consistently as $Z$ grows larger.
    \item Overall, the competitive ratio in practice seems to always be significantly smaller than the worse-case bound would predict.
\end{itemize}

Let us now proceed to detail our experiments;~\Cref{sec:discussion} presents a discussion of further research direction that follow from the conclusions outlined above.

\paragraph{Datasets}

Our experiments use a total of 18 datasets, 13 of which are drawn from diverse industry applications, and 5 of which are synthetic, following standard practices of caching benchmarks~\cite{cache2kCache2kJava, LV21, sadek2024algorithmscachingmtsreduced}. A general description of the datasets is presented in~\Cref{tab:datasets}. 
Synthetic datasets were generated by~\emph{Zipfian} distributions with different parameters $\alpha$; using the probability-density function
\(
    p_\alpha(m) = \frac{m^{-\alpha}}{\zeta(m)}
\)
for integer $m \geq 1$, and $\zeta$ being Riemann's zeta function. We used parameters $\alpha \in (1.3, 1.5, 1.7, 1.9, 2.1)$. We \emph{``normalized''} every dataset to consist of a total of $T = 5000$ requests, filtering by the ``pages''\footnote{Due to the diverse nature of the datasets, we interpret all the dataset-dependent type of requests (e.g., URLs, hashes, etc.) as pages.} requested most often.

\begin{table}
    \centering
    \begin{tabular}{r|r|r|r}
        \toprule
         Category & \# traces  & $n$ (distinct pages) & Source \\
         \midrule
          \texttt{WikiMedia} & 3  & [461, 480, 455] & \cite{Wikimedia}\\
          \texttt{Bright Kite} & 1 &  66 & \cite{BrightKite}\\
          \texttt{WikiBench} & 4 & [476, 466, 466, 455] & \cite{Wikibench}\\
          \texttt{WebSearch} & 3 & [500, 497, 494] & \cite{Umass}\\
          \texttt{Financial} & 2 & [128, 380] & \cite{Umass}\\
          \texttt{Zipf} & 5 & [985, 427, 205, 114, 79] & Synthetic\\
         \bottomrule
    \end{tabular}
    \caption{Description of the datasets used.}
    \label{tab:datasets}
\end{table}

\begin{figure}
    \centering
    \input{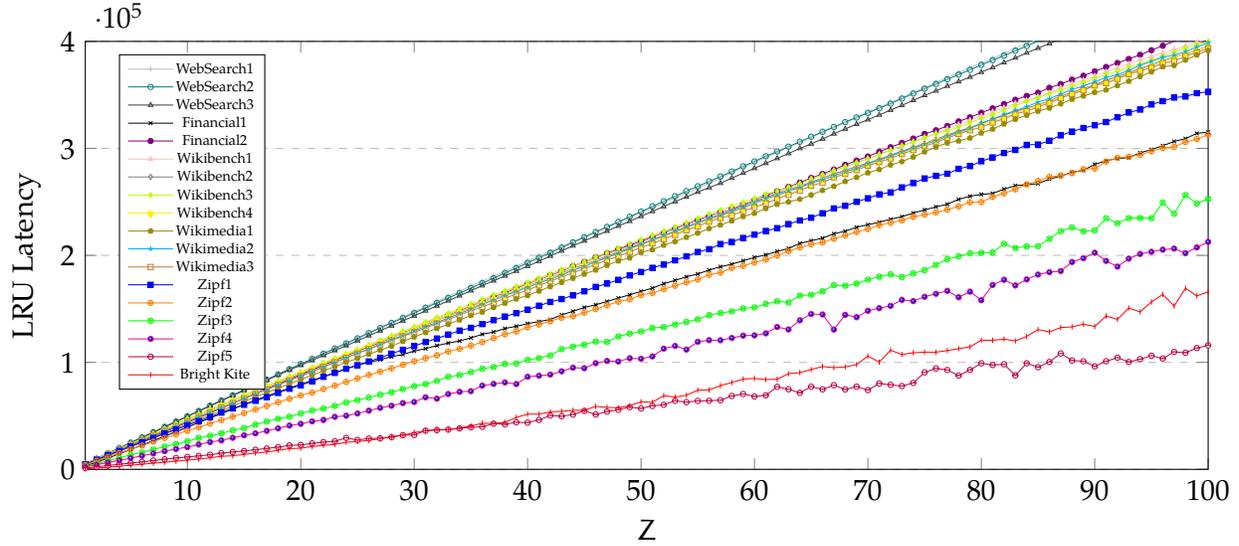}
    \caption{Total latency for LRU as a function of the delay $Z$, with $k = 5$ over all datasets.}
    \label{fig:LRU_v_Z}
\end{figure}
\begin{figure}
    \centering
    \input{figures/cr_plot_tikz}
    \caption{Empirical competitive ratio of LRU (as a function of $Z$) over a WikiBench dataset. Results for other datasets were similar.}
    \label{fig:CR_v_Z}
\end{figure}
\begin{figure}
    \centering
        \begin{tikzpicture}
        \begin{axis}[
            every axis plot/.append style={line width=0.575pt},
            title=,
            width=\linewidth,
            height=\axisdefaultheight,
            xlabel=$k$,
            ylabel=$\frac{\LLRU}{\LOPT}$,
            xmin=1, xmax=30,
            ymin=1, ymax=1.6,
            legend pos=north west,
            ymajorgrids=true,
            grid style=dashed,
            mark size=1pt,
 legend style={nodes={scale=0.5, transform shape}}
        ]
            
        \addplot[
            color=red,
            mark=*,
        ]
        coordinates {
            (1.0, 1.0781284004352558)
(2.0, 1.1341801385681294)
(3.0, 1.1341801385681294)
(4.0, 1.1762285160977972)
(5.0, 1.2118772018117765)
(6.0, 1.2434827945776852)
(7.0, 1.2434827945776852)
(8.0, 1.2735341581495427)
(9.0, 1.297147604541678)
(10.0, 1.3197724039829304)
(11.0, 1.3197724039829304)
(12.0, 1.3412142440163457)
(13.0, 1.3618677042801557)
(14.0, 1.383578431372549)
(15.0, 1.383578431372549)
(16.0, 1.4018164735358596)
(17.0, 1.4176)
(18.0, 1.4362328319162851)
(19.0, 1.4362328319162851)
(20.0, 1.456093489148581)
(21.0, 1.4720708446866484)
(22.0, 1.4880249913224575)
(23.0, 1.4880249913224575)
(24.0, 1.5)
(25.0, 1.5125899280575539)
(26.0, 1.5267203513909224)
(27.0, 1.5379464285714286)
(28.0, 1.5379464285714286)
(29.0, 1.548960302457467)
(30.0, 1.5572196620583718)
        };

        \addplot[
            color=blue,
            mark=triangle*,
        ]
        coordinates {
            (1.0, 1.0760482817637256)
(2.0, 1.1254372813593203)
(3.0, 1.1254372813593203)
(4.0, 1.1661822788583351)
(5.0, 1.1990216271884655)
(6.0, 1.2288518833794424)
(7.0, 1.2285250784532737)
(8.0, 1.2579389019908955)
(9.0, 1.2830832581227436)
(10.0, 1.3055603822762816)
(11.0, 1.3055603822762816)
(12.0, 1.3262367123938477)
(13.0, 1.3458120282444608)
(14.0, 1.367550390977912)
(15.0, 1.367550390977912)
(16.0, 1.390923566878981)
(17.0, 1.4100045534378456)
(18.0, 1.4276656486522374)
(19.0, 1.4276656486522374)
(20.0, 1.4399701967690588)
(21.0, 1.4554021401449775)
(22.0, 1.4685885041307787)
(23.0, 1.4685885041307787)
(24.0, 1.4782748632121017)
(25.0, 1.4898367332097016)
(26.0, 1.50471436494731)
(27.0, 1.5173307296556386)
(28.0, 1.5173307296556386)
(29.0, 1.5287842420086104)
(30.0, 1.5392888923279298)
        };

        \addplot[
            color=green,
            mark=square*,
        ]
        coordinates {
            (1.0, 1.0716653384613608)
(2.0, 1.1202672605790647)
(3.0, 1.1203627069133397)
(4.0, 1.1618528679571203)
(5.0, 1.195073515873642)
(6.0, 1.2239057467828072)
(7.0, 1.2239057467828072)
(8.0, 1.246476904149424)
(9.0, 1.2737830030038662)
(10.0, 1.2982681300525167)
(11.0, 1.2982681300525167)
(12.0, 1.3192868474729924)
(13.0, 1.337461348175634)
(14.0, 1.3567453918871921)
(15.0, 1.3567453918871921)
(16.0, 1.3730446995011865)
(17.0, 1.3909075934771866)
(18.0, 1.4089197458225464)
(19.0, 1.4089197458225464)
(20.0, 1.4238737426529808)
(21.0, 1.4397076523225594)
(22.0, 1.4547262268304808)
(23.0, 1.4547262268304808)
(24.0, 1.4712234762979683)
(25.0, 1.4840539796199395)
(26.0, 1.4987405071557853)
(27.0, 1.5134402487109493)
(28.0, 1.5134402487109493)
(29.0, 1.5227902679396366)
(30.0, 1.5317506838608832)
        };

        \addplot[
            color=magenta,
            mark=square,
        ]
        coordinates {
            (1.0, 1.0692752320345706)
(2.0, 1.1171799887279117)
(3.0, 1.1171614699885624)
(4.0, 1.1551447500086263)
(5.0, 1.1850566818591293)
(6.0, 1.2133109578431012)
(7.0, 1.2149689762150981)
(8.0, 1.2373369730057628)
(9.0, 1.2645206574509154)
(10.0, 1.288618292244376)
(11.0, 1.289236458125125)
(12.0, 1.308989109073043)
(13.0, 1.329826616192249)
(14.0, 1.350539126721469)
(15.0, 1.3505247088213002)
(16.0, 1.371549222006416)
(17.0, 1.3915463481426233)
(18.0, 1.3996596130935497)
(19.0, 1.3996596130935497)
(20.0, 1.4133883285385833)
(21.0, 1.4303298925439112)
(22.0, 1.4443886517700226)
(23.0, 1.4443886517700226)
(24.0, 1.456487120576687)
(25.0, 1.4683090760829276)
(26.0, 1.4806507085744411)
(27.0, 1.4927624872579002)
(28.0, 1.4921160557352828)
(29.0, 1.5054859745367974)
(30.0, 1.513058207171838)
        };

        \addplot[
            color=orange,
            mark=heart,
        ]
        coordinates {
            (1.0, 1.0670085366957256)
(2.0, 1.1155302429712126)
(3.0, 1.1152897461748685)
(4.0, 1.1496641119013502)
(5.0, 1.1831339737349118)
(6.0, 1.2058434792338302)
(7.0, 1.2068309070548713)
(8.0, 1.2372413096637838)
(9.0, 1.2571128282559139)
(10.0, 1.2818080950861035)
(11.0, 1.2818080950861035)
(12.0, 1.3038458550229197)
(13.0, 1.3215699967463674)
(14.0, 1.332783558498468)
(15.0, 1.3339591189370277)
(16.0, 1.3508683593621025)
(17.0, 1.3709273816444658)
(18.0, 1.3864211185129933)
(19.0, 1.3857201617878934)
(20.0, 1.4034965096062277)
(21.0, 1.4175442646823775)
(22.0, 1.4327480626456404)
(23.0, 1.4332658703626737)
(24.0, 1.4487859104982224)
(25.0, 1.4624605560430943)
(26.0, 1.4758328354812096)
(27.0, 1.4864706108928434)
(28.0, 1.4864706108928434)
(29.0, 1.491677574384121)
(30.0, 1.5059185394651982)
        };

        \addplot[
            color=cyan,
            mark=|,
        ]
        coordinates {
            (1.0, 1.0675355333693994)
(2.0, 1.1089259632593447)
(3.0, 1.1106280840192433)
(4.0, 1.1429764409256613)
(5.0, 1.1735040559533427)
(6.0, 1.2021257850910847)
(7.0, 1.2032405616366657)
(8.0, 1.227253857532933)
(9.0, 1.2511544781758646)
(10.0, 1.274694694940659)
(11.0, 1.2747416790554238)
(12.0, 1.2920377259234208)
(13.0, 1.3108731687648403)
(14.0, 1.33304743510001)
(15.0, 1.3330998042284046)
(16.0, 1.348980137764997)
(17.0, 1.363751911639762)
(18.0, 1.3808098859842346)
(19.0, 1.381068179771233)
(20.0, 1.3987951637251168)
(21.0, 1.413131812611528)
(22.0, 1.43226093813891)
(23.0, 1.43226093813891)
(24.0, 1.4464146603850576)
(25.0, 1.4536136780951807)
(26.0, 1.4712533757679154)
(27.0, 1.482891091260448)
(28.0, 1.4834437086092715)
(29.0, 1.4933034485470489)
(30.0, 1.5091889732321215)
        };

            \legend{$Z = 1$, $Z = 10$, $Z = 20$, $Z = 30$, $Z = 40$, $Z = 50$}
        \end{axis}
    \end{tikzpicture}
    \caption{Empirical competitive ratio of LRU (as a function of $k$) over a WikiBench dataset. Results for other datasets were similar.}
    \label{fig:CR_v_k}
\end{figure}
\begin{figure}
     \centering
    \input{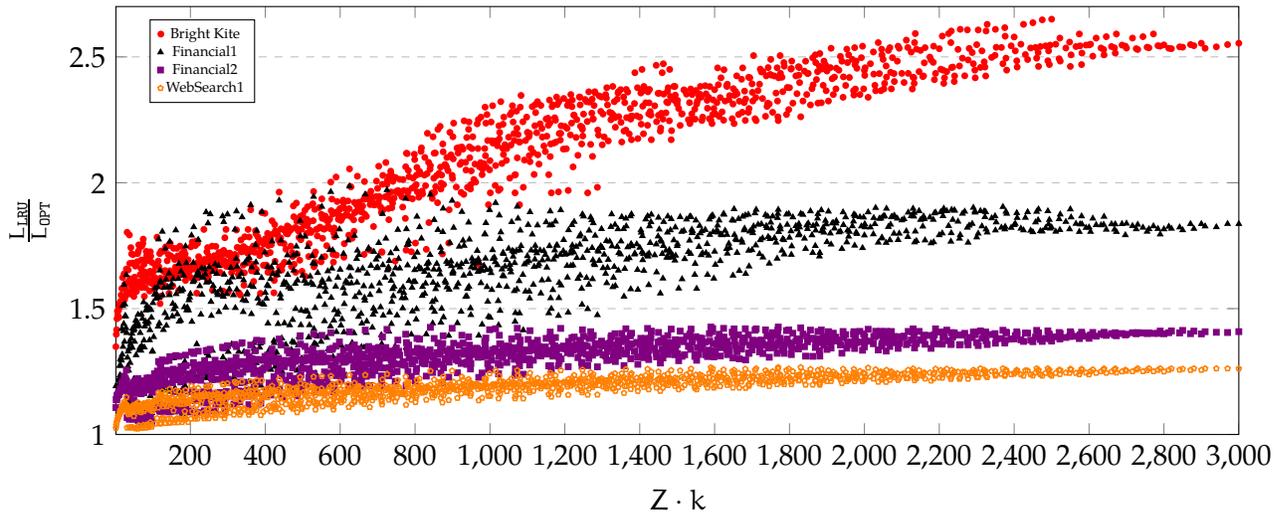}
    \caption{Empirical competitive ratio as a function of $Zk$. Only a few datasets are displayed to avoid cluttering, but the observed behavior was consistent across all datasets.}
    \label{fig:cloud_points}
\end{figure}

\paragraph{Experimental Setup} 
All experiments were run on a MacBook Pro M1 2020 with 16GB of RAM. The implementation of both LRU and ``OPT'' are taken from Atre et al.~\cite{ASWB20}, where ``OPT'' is in fact a \emph{near-optimal} solution computed offline using \texttt{Gurobi Optimizer}; to the best of our knowledge, it is not known whether optimal offline solutions can be computed in polynomial time. We verified that in each of our experiments, the lower bound and upper bound on the optimal solution obtained by \texttt{Gurobi Optimizer} always differed by less than $10\%$ from each other, and most of the time they were equal, thus certifying an optimal solution.

\paragraph{Interpretation of results}

As shown in~\Cref{fig:LRU_v_Z}, for a fixed $k$ the performance of LRU increases linearly with respect to $Z$, which implies that the average penalty when a page is not in the cache (either a ``miss'' or a ``delayed hit'') is $\Theta(Z)$. However, as $Z$ affects OPT (or any reference algorithm) as well, we observe in~\Cref{fig:CR_v_Z} that the empirical competitive ratio is actually slightly decreasing on $Z$. 
As we observe in~\Cref{fig:CR_v_k}, the competitive ratio does seem to increase with $k$ in practice, albeit at a rate that is much slower than linear.

Finally, the main conclusion of our experiments, visible in~\Cref{fig:cloud_points} is that the empirical competitive ratio, as a function of $Zk$, behaves much better than the pessimistic worse-case analysis of~\Cref{thm:lru-bound} would suggest. The next section presents a contextual discussion of these observation.

\section{Discussion}
\label{sec:discussion}

For the general delayed-hits caching model, our worst-guarantee competitive ratio guarantee of $O(Zk)$ is \emph{optimal} given the $\Omega(Zk)$ lower bound of Manohar and Williams~\cite{MW20}. However, our experimental results show that on practical data, the true competitive ratio is often far below this threshold, and in particular does not seem to depend on $Z$ at all. Is the dependency on $Z$ in~\Cref{thm:lru-bound} only an artifact of the analysis? (Roughly, the $Z$ factor in the analysis comes from upper-bounding the total loss from requests for a missed page in a phase by $Z^2$ --- the worst-case being a sequence of $Z$ consecutive requests for the same page --- while lower-bounding the loss by $Z$ --- the worst-case being only a single missed request.)

This poses tantalizing challenges: can one characterize assumptions on the request sequence that would yield a competitive ratio that \emph{does not} depend on $Z$? For instance, the \cite{MW20} lower bound of $\Omega(Zk)$ uses \emph{bursty} request sequences; is it possible to overcome this lower bound by making burstiness assumptions (e.g., no page gets requested more than $k$ time per phase/superphase)? Or, are there reasonable stochastic models where marking algorithms can be shown to have a competitive ratio?  In traditional caching, the competitive ratio of LRU is also observed to be constant in a variety of applications, and theoretical measures such as \emph{``locality''}, guarantee that certain hypothesis on the data distribution do indeed guarantee $O(1)$-competitiveness for LRU~\cite{PerformanceMeasures, DORRIGIV20093694, Becchetti}. Extending these results to the delayed hits model is a natural next step.

There are other interesting theoretical questions as well. Regarding our LRU analysis, could it be extended to heterogeneous page sizes, weights, or even delays? Or does it extend to more sophisticated classes of algorithms, like LRU-MAD considered in \cite{ASWB20}? Could our techniques prove a $O(Z \log k)$ competitive ratio bound for \emph{randomized} marking algorithms, and do the techniques of \cite{MW20} prove a $\Omega(Z \log k)$ lower bound? Recall that \cite{FKL+91} proved a $\Theta(\log k)$ bound for randomized algorithms in classical caching. We conjecture that $\Theta(Z \log k)$ is the correct threshold for the randomized case, just as our algorithm and the \cite{MW20} lower bound show that $\Theta(Zk)$ is the correct threshold for the deterministic case, where the classical threshold was $\Theta(k)$. Finally, what is the complexity of calculating the offline optimal value? Is it \textbf{NP}-hard in general, or is there a polynomial-time algorithm? What about in parameterized settings, such as $Z=2$?

\bibliographystyle{alpha}
\bibliography{delayedhits.bib}

\end{document}